\renewenvironment{proof}[1][\proofname]{\par
  \pushQED{\qed}%
  \normalfont \topsep6\p@\@plus6\p@\relax
  \list{}{\leftmargin=1em
          \rightmargin=\leftmargin
          \settowidth{\itemindent}{\itshape#1}%
          \labelwidth=\itemindent
          \parsep=0pt \listparindent=\parindent 
  }
  \item[\hskip\labelsep
        \itshape
    #1\@addpunct{.}]\ignorespaces
}{%
  \popQED\endlist\@endpefalse
}
\newcommand{\BlackBox}{\rule{1.5ex}{1.5ex}}  
    \renewenvironment{proof}{\par\noindent{\bf Proof\ }}{\hfill\BlackBox\\[2mm]}
    \newenvironment{proof}{\par\noindent{\bf Proof\ }}{\hfill\BlackBox\\[2mm]}
\newtheorem{theorem}{Theorem}
\theoremstyle{remark}
\newtheorem*{remark}{Remark}
\journal{Spatial Statistics}
\begin{document}

\begin{frontmatter}



\title{A Delayed Acceptance Auxiliary Variable MCMC for Spatial Models with Intractable Likelihood Function\tnoteref{t1}} 
\tnotetext[t1]{Jong Hyeon Lee and Jongmin Kim contributed equally to this work.}


\author[label1]{Jong Hyeon Lee} 
\author[label1]{Jongmin Kim} 
\author[label1]{Heesang Lee}
\author[label1,label2]{Jaewoo Park\corref{cor1}}%
\ead{jwpark88@yonsei.ac.kr}
\affiliation[label1]{organization={Department of Statistics and Data Science},
            addressline={Yonsei University}, 
            city={Seoul},
            country={South Korea}}
            
\affiliation[label2]{organization={Department of Applied Statistics},
            addressline={Yonsei University}, 
            city={Seoul},
            country={South Korea}}

\cortext[cor1]{Corresponding author}

\begin{abstract}
A large class of spatial models contains intractable normalizing functions, such as spatial lattice models, interaction spatial point processes, and social network models. Bayesian inference for such models is challenging since the resulting posterior distribution is doubly intractable. Although auxiliary variable MCMC (AVM) algorithms are known to be the most practical, they are computationally expensive due to the repeated auxiliary variable simulations. To address this, we propose delayed-acceptance AVM (DA-AVM) methods, which can reduce the number of auxiliary variable simulations. The first stage of the kernel uses a cheap surrogate to decide whether to accept or reject the proposed parameter value. The second stage guarantees detailed balance with respect to the posterior. The auxiliary variable simulation is performed only on the parameters accepted in the first stage. We construct various surrogates specifically tailored for doubly intractable problems, including subsampling strategy, Gaussian process emulation, and frequentist estimator-based approximation. We validate our method through simulated and real data applications, demonstrating its practicality for complex spatial models. 

\end{abstract}



\begin{keyword}
doubly intractable distributions; delayed-acceptance MCMC; surrogate model; detailed balance; spatial models


\end{keyword}

\end{frontmatter}


\section{Introduction}
\label{sec:intro}

Intractable spatial models arise in many disciplines, for instance, Potts models \citep{potts1952some} for discrete lattice data, interaction point processes \citep{strauss1975model, attrep} for spatial point pattern data, and exponential random graph models (ERGMs) \citep{robins2007introduction} for social network data. Bayesian inference for such models is challenging because the likelihood functions involve intractable normalizing functions, which are functions of the parameters of interest. Let $\mathbf{x}\in \mathcal{X}$ be a realization from an unnormalized probability model $h(\mathbf{x}|\bm{\theta})$ with a model parameter $\bm{\theta} \in \bm{\Theta}$. The unnormalized probability model has an intractable normalizing function $Z(\bm{\theta}) = \int_{\mathcal{X}} h(\mathbf{x} | \bm{\theta}) d\mathbf{x}$. With a prior $p(\bm \theta)$ the posterior of $\bm{\theta}$ is 
\[
\pi(\bm{\theta}|\mathbf{x}) = \frac{p(\bm \theta) h(\mathbf{x}|\bm \theta)}{p(\mathbf{x})Z(\bm\theta)},
\]
where the marginal likelihood is 
\[
p(\mathbf{x})=\int_{\bm\Theta} \frac{p(\bm \theta)h(\mathbf{x}|\bm \theta)}{Z(\bm\theta)}d\bm\theta.
\]
Here, $p(\mathbf{x})$ is intractable, but can be ignored in the inference, since it does not depend on $\bm\theta$. In contrast, the other intractable term $Z(\bm\theta)$ is a function of $\bm\theta$ and therefore cannot be disregarded, which makes the application of standard Markov chain Monte Carlo (MCMC) algorithms challenging. \cite{Murray2006} referred to $\pi(\bm{\theta}|\mathbf{x})$ as a doubly-intractable distribution because the posterior involves both $p(\mathbf{x})$ and $Z(\bm\theta)$ as intractable terms.


To address this challenge, several Bayesian approaches have been developed, and \cite{park2018bayesian} classified them into two categories: (1) {\textit{likelihood approximation approaches}} and (2) {\textit{auxiliary variable approaches}}. The likelihood approximation approaches \citep{atchade2008bayesian, lyne2015russian, alquier2014noisy} directly approximate $Z(\bm{\theta})$ using importance sampling estimates and plug these estimates into the acceptance probability of the Metropolis-Hastings (MH) algorithm. In contrast, the auxiliary variable approaches \citep{moller2006, Murray2006, liang2010double, liang2015adaptive} simulate an auxiliary variable from $h(\cdot | \bm{\theta})$ at each iteration to cancel out the normalizing functions in the acceptance probability. As an extension, \cite{caimo2011bayesian} developed a population MCMC approach for ERGMs to improve the mixing of the standard AVM algorithm. 
Through an extensive numerical study, \cite{park2018bayesian} reported that auxiliary variable MCMC (AVM) methods \citep{Murray2006, liang2010double} are the most efficient in terms of effective sample size per time, defined as the effective sample size (accounting for autocorrelation among MCMC samples) divided by the wall-clock computing time. This measures the sampling efficiency of an algorithm relative to computational cost and is widely used to compare the practical performance of MCMC methods. Therefore, we also compare our proposed methods with these AVM methods. Due to its
ease of use, AVM approaches have been widely used in many applications. Examples include 
astrophysical problem \citep{tak2018repelling}, longitudinal item response model \citep{park2022bayesian}, and spatial count data exhibiting under- and overdispersion \citep{kang2024fast}. However, when the dimension of $\mathbf{x}$ becomes large, AVM methods become computationally expensive because auxiliary variable simulations from $h(\cdot|\bm{\theta})$ require the longer length of the Markov chain. 

In this manuscript, we propose delayed-acceptance AVM (DA-AVM) for intractable spatial models. The DA-MCMC method introduced by \cite{christen2005markov} is a two-stage Metropolis-Hastings (MH) algorithm that reduces the computational burden associated with calculating the likelihoods of complex models using the initial screening step. In the first stage of the kernel, a computationally cheap surrogate is used to evaluate the proposed parameters. If the proposal is accepted in the first stage, the algorithm evaluates the expensive likelihood function in the second stage. The final acceptance or rejection of the proposal is based on this correction step. Due to its efficiency and flexibility, the DA approaches have been widely used by constructing surrogates \citep{golightly2015delayed,sherlock2017adaptive,cao2024using} or partitioning large datasets \citep{banterle2015accelerating,quiroz2018speeding}. Our work is motivated by these recent computational approaches.

In the first stage of DA-AVM, we construct surrogates tailored for a wide variety of doubly intractable problems. Specifically, we investigate the subsampling strategy, Gaussian process emulation, and frequentist estimator-based approximation such as Monte Carlo maximum likelihood (MCML) \citep{geyer1992constrained} or maximum pseudo-likelihood (MPL) \citep{besag1974spatial}. From these surrogates, we can quickly rule out implausible regions of $\bm{\Theta}$ without simulating an auxiliary variable belonging to $\mathcal{X}$. If the proposal is accepted in the first stage, we simulate an auxiliary variable to decide the final acceptance. Since the algorithm satisfies the detailed balance condition, DA-AVM produces the same posterior distribution as the standard AVM while requiring fewer auxiliary variable simulations. Note that the performance of purely emulation-based approaches \citep{park2020function,vu2023warped} greatly depends on the accuracy of the surrogate model. \cite{park2020function} developed a two-stage MCMC algorithm that approximates $Z(\bm{\theta})$ via importance sampling and interpolates it with a Gaussian process emulator. Theorem 1 of \cite{park2020function} showed that, as the number of importance samples and design points increases, the Markov chain samples from the Gaussian process emulation algorithm will be close to $\pi(\bm{\theta}|\mathbf{x})$ in terms of total variation distance. In practice, however, constructing an accurate emulator is challenging in high-dimensional $\bm{\Theta}$: obtaining accurate importance sampling estimates requires importance parameters close to the MLE, and, for the Gaussian process emulator, the number of design points grows exponentially with the dimension. Although \cite{park2020function} proposed to address this issue by constructing the emulator with a short run of another algorithm (e.g., AVM), the approach is heuristic and becomes impractical beyond five dimensions. Our study also finds that as the parameter dimension increases (Figures 2–4), Gaussian process emulation fails to accurately approximate the true posterior. On the other hand, our DA-AVM is robust in the surrogate model construction because the second stage of the kernel corrects the discrepancy, ensuring convergence to the target posterior.

Approximate Bayesian computation (ABC) methods \citep{beaumont2002approximate,marin2012approximate} provide a natural framework for models with intractable likelihood functions. For example, \cite{vihrs2022approximate} developed a new point process model that captures both aggregation and repulsion, and employed an ABC method based on summary statistics specifically designed for the model. Motivated by the AVM methods, \cite{stoica2017abc} developed the ABC Shadow algorithm, which generates an auxiliary variable from $h(\cdot|\mathbf{x})$ to bypass the evaluation of the intractable $Z(\bm{\theta})$ and approximate the posterior distribution within the ABC framework. Extending this idea, \cite{stoica2021shadow} proposed the Shadow simulated annealing algorithm, which improves convergence by introducing a temperature schedule. \cite{laporte2022morpho} proposed a social process model that integrates spatial processes with ERGMs, using the ABC Shadow algorithm to approximate the intractable posterior. To reduce the number of synthetic data simulations in ABC methods, \cite{everitt2021delayed} incorporated a delayed acceptance strategy, which also motivates our work.

Compared to existing works, DA-AVM is an easy-to-use yet computationally efficient method. In contrast to ABC approaches \citep[cf.][]{shirota2017approximate, vihrs2022approximate}, it does not rely on summary statistics or tolerance levels that strongly affect the quality of the approximate posterior. Compared to likelihood-approximation approaches \citep{atchade2008bayesian, lyne2015russian, alquier2014noisy}, it requires tuning far fewer components and is fast because it does not require multiple importance samples at each MCMC iteration. Similar to AVM methods, our framework is easy to implement as long as auxiliary variables can be generated from the model, and the delayed acceptance step further improves efficiency by reducing the number of auxiliary variable simulations. Due to its generality, the framework is applicable to a wide range of models, including lattice spatial models, spatial point processes, network models, and time series models, as demonstrated in our numerical study.

The remainder of this manuscript is organized as follows. In Section~\ref{sec:2}, we introduce AVM algorithms for intractable spatial models and discuss their computational challenges. We also describe the background for DA-MCMC approaches. In Section~\ref{sec:3}, we propose an efficient DA-AVM with various surrogate candidates. We show that our DA-AVM satisfies the detailed balance condition with respect to the target posterior, and the resulting Markov chain is ergodic. In Section~\ref{sec:simulation}, we study the performance of our method with four intractable models—three spatial and one non-spatial—illustrating that our method can reduce computational costs. In Section~\ref{sec:discuss}, we summarize the key findings and contributions of this work.

\section{Computational Methods}
\label{sec:2}

\subsection{Auxiliary Variable MCMC}

As a seminal work, \cite{moller2006} developed the AVM method, which constructs the joint posterior of model parameters and an auxiliary variable to avoid direct evaluation of $Z(\bm{\theta})$. Let $f(\mathbf{y}|\bm{\theta},\mathbf{x})$ denote the conditional distribution of the auxiliary variable $\mathbf{y} \in \mathcal{X}$. Then the joint posterior distribution can be written as  
\begin{align}
\pi(\bm{\theta}, \mathbf{y} | \mathbf{x})
&= f(\mathbf{y} | \bm{\theta}, \mathbf{x}) \, \pi(\bm{\theta} | \mathbf{x}) \notag \\
&\propto f(\mathbf{y}|\bm{\theta},\mathbf{x})p(\bm \theta)\frac{h(\mathbf x|\bm \theta)}{Z(\bm \theta)}.
\label{jointpostAVM}
\end{align}
The MH algorithm is implemented using a joint proposal that factorizes into two components as follows:
\begin{equation}
q((\bm{\theta}, \mathbf{y}) \rightarrow (\bm{\theta}^*, \mathbf{y}^*))=q_1(\bm{\theta}^*|\bm{\theta}, \mathbf{y})q_2(\mathbf{y}^*|\bm{\theta}^*, \bm{\theta}, \mathbf{y}).
\label{jointproposalAVM}
\end{equation}
Here, the first component $q_1$ can be chosen as a simple random walk proposal that does not depend on $\mathbf{y}$, allowing us to set $q_1(\bm{\theta}^*|\bm{\theta},\mathbf{y})=q(\bm{\theta}^*|\bm{\theta})$. The second component $q_2$ is specified independently of the current state $(\bm{\theta}, \mathbf{y})$ as $q_2(\mathbf{y}^*|\bm{\theta}^*, \bm{\theta}, \mathbf{y})=h(\mathbf{y}^* | \bm{\theta}^*)/Z(\bm{\theta}^*)$. By substituting the joint posterior in \eqref{jointpostAVM} and the factorized proposal in \eqref{jointproposalAVM} into the MH acceptance ratio, we obtain:
\begin{equation}
\alpha_{\mathrm{M{\o}ller}}((\bm{\theta}, \mathbf{y}) \rightarrow (\bm{\theta}^*, \mathbf{y}^*)) = \min \left\{ 1, \frac{f(\mathbf{y}^* | \bm{\theta}^*, \mathbf{x}) p(\bm{\theta}^*) h(\mathbf{x} | \bm{\theta}^*) \cancel{Z(\bm{\theta})} h(\mathbf{y} | \bm{\theta}) \cancel{Z(\bm{\theta}^*)} q(\bm{\theta} | \bm{\theta}^*)}
{f(\mathbf{y} | \bm{\theta}, \mathbf{x}) p(\bm{\theta}) h(\mathbf{x} | \bm{\theta}) \cancel{Z(\bm{\theta}^*)} h(\mathbf{y} | \bm{\theta}^*) \cancel{Z(\bm{\theta})} q(\bm{\theta}^* | \bm{\theta})}
\right\}.
\label{accprobAVM}
\end{equation}
\eqref{accprobAVM} does not contain the intractable terms. We can obtain the original target posterior $\pi(\bm{\theta}|\mathbf{x})$ by taking the marginal samples of $\bm{\theta}$. Note that the mixing of the algorithm depends on the choice of $f(\mathbf{y}|\bm{\theta}, \mathbf{x})$ \citep{moller2006}. Ideally, if we set $f(\mathbf{y}|\bm{\theta}, \mathbf{x})=h(\mathbf{y}|\bm{\theta})/Z(\bm{\theta})$, then \eqref{accprobAVM} becomes equivalent to the acceptance probability of the standard MH algorithm with the stationary distribution $\pi(\bm{\theta}|\mathbf{x})$. However, since $Z(\bm{\theta})$ is intractable, this choice is not feasible in practice. Instead, \cite{moller2006} suggest using $f(\mathbf{y} | \bm{\theta}, \mathbf{x})=h(\mathbf{y}|\widehat{\bm{\theta}})/Z(\widehat{\bm{\theta}})$, where $\widehat{\bm{\theta}}$ can be obtained using the maximum pseudolikelihood estimate (MPLE) \citep{besag1974spatial} or the Markov chain Monte Carlo maximum likelihood estimate (MCMC-MLE) \citep{geyer1992constrained}.

Building on the work of \cite{moller2006}, \cite{Murray2006} proposed an exchange algorithm that eliminates the need to estimate model parameters before running the MCMC algorithm. Let $\mathbf{y}$ be an auxiliary variable that follows $h(\mathbf{y} |\bm{\theta}^*)/Z(\bm{\theta}^*)$. We denote the conditional distribution of $\bm{\theta}^*$ given $\bm{\theta}$ as $q(\bm{\theta}^*|\bm \theta)$, where $\bm{\theta}$ and $\bm{\theta}^*$ are parameters associated with the data $\mathbf{x}$ and $\mathbf{y}$, respectively. For $q(\bm{\theta}^*|\bm \theta)$, we can use a random walk distribution centered at $\bm \theta$. Then, the augmented posterior can be written as
\begin{equation}
    \pi(\bm \theta,\bm{\theta}^*,\mathbf {y}|\mathbf x)\propto p(\bm \theta) \frac{h(\mathbf x|\bm \theta)}{Z(\bm \theta)}q(\bm{\theta}^*|\bm \theta)\frac{h(\mathbf{y} |\bm{\theta}^*)}{Z(\bm{\theta}^*)}.
    \label{jointpost}
\end{equation}
For the augmented posterior, $( \bm{\theta}^*,\mathbf{y} )$ is updated using a block Gibbs sampler. Specifically, we first draw $\bm{\theta}^* \sim q(\bm{\theta}^* | \bm{\theta})$, and then sample $\mathbf{y} \sim h(\mathbf{y} |\bm{\theta}^*)/Z(\bm{\theta}^*)$. Lastly, $\bm{\theta}$ is updated by exchanging parameter settings. The swapping proposal suggests that $\mathbf{x}$ is associated with $\bm{\theta}^*$ and $\mathbf{y}$ is associated with $\bm{\theta}$. Since the swapping proposal is symmetric, the MH acceptance probability for the exchange move from $\bm{\theta}$ to the proposed $\bm{\theta}^\ast$ is 
\begin{align}
\alpha_{\mathrm{AVM}}(\bm \theta \rightarrow \bm{\theta}^*; \mathbf{y})
&= \min \left\{ 1, 
\frac{p(\bm{\theta}^*) h(\mathbf{x} | \bm{\theta}^*) \cancel{Z(\bm{\theta})} h(\mathbf{y} | \bm{\theta}) \cancel{Z(\bm{\theta}^*)} q(\bm{\theta} | \bm{\theta}^*)}
{p(\bm{\theta}) h(\mathbf{x} | \bm{\theta}) \cancel{Z(\bm{\theta}^*)} h(\mathbf{y} | \bm{\theta}^*) \cancel{Z(\bm{\theta})} q(\bm{\theta}^* | \bm{\theta})}
\right\}.
\label{auxMCMCacc}
\end{align} 
Similar to \eqref{accprobAVM}, all the intractable terms are canceled out in \eqref{auxMCMCacc}. The marginal samples of $\bm{\theta}$ follow the target posterior distribution. The AVM algorithm targets the augmented posterior $\pi(\bm{\theta}, \bm{\theta}^*, \mathbf{y}|\mathbf{x})$, while our primary interest lies in the marginal posterior $\pi(\bm{\theta}|\mathbf{x})$, obtained by retaining only the samples of $\bm{\theta}$. Following the expression in the proof of Corollary 2.3 in the appendix of \cite{alquier2014noisy}, the marginal transition kernel of the AVM algorithm is
\begin{equation}
K_{\mathrm{AVM}}(\bm{\theta}, \mathbf{A}) = \int_{\mathbf{A}} q(\bm \theta^* | \bm\theta)\,\int_{\mathcal{X}}{\alpha_{\mathrm{AVM}}(\bm{\theta} \rightarrow \bm{\theta}^* ; \mathbf{y}) \frac{h(\mathbf{y}|\bm{\theta}^*)} {Z(\bm{\theta}^*)} d \mathbf{y}} d\bm{\theta}^* + (1-r_{\mathrm{AVM}}(\bm{\theta})) \mathbf{1}_{\mathbf{A}}(\bm{\theta}),
\label{kernel_AVM}
\end{equation}
where $\mathbf{A} \subseteq \mathcal{B}(\bm{\Theta})$ is a measurable set in the Borel $\sigma$-algebra on the parameter space, and
\[
r_{\mathrm{AVM}}(\bm{\theta})=\int_{\bm{\Theta}}{q(\bm{\theta}^*|\bm{\theta})\int_{\mathcal{X}}{\alpha_{\mathrm{AVM}}(\bm{\theta} \rightarrow \bm{\theta}^* ; \mathbf{y}) \frac{h(\mathbf{y}|\bm{\theta}^*)} {Z(\bm{\theta}^*)} d \mathbf{y}} d\bm{\theta}^*}.
\]
This marginal Markov transition kernel is equivalent to the projection of the Markov transition kernel on the augmented space onto the marginal space. We provide the details in \ref{deriv_marginal}.

Note that both \cite{moller2006} and \cite{Murray2006} obtain the target posterior $\pi(\bm{\theta}|\mathbf{x})$ as the marginal of the joint posterior—$\pi(\bm{\theta}, \mathbf{y}|\mathbf{x})$ in the case of \cite{moller2006}, and $\pi(\bm{\theta}, \bm{\theta}^*, \mathbf{y}|\mathbf{x})$ in the case of \cite{Murray2006}. However, this requires that $\mathbf{y}$ be sampled exactly from the probability model. Although they employed a perfect sampler \citep{propp1996exact} to generate exact samples of $\mathbf{y}$, such samplers are only available in limited cases. To address this, \cite{liang2010double} developed the double Metropolis-Hastings (DMH) algorithm. \cite{liang2010double} generates $\mathbf{y}$ approximately from the probability model by using a standard MCMC sampler, the so-called inner sampler. The remaining steps proceed in the same manner as in the exchange algorithm. Although the stationary distribution of the DMH algorithm is not exactly $\pi(\bm{\theta}, \bm{\theta}^*, \mathbf{y} | \mathbf{x})$ because $\mathbf{y}$ is approximately sampled from the probability model, the DMH samplers have been widely used in many applications \citep{attrep, park2022bayesian, kang2024fast}. The DMH samplers can provide a reliable approximation to the posterior with the appropriate length of the inner sampler \citep{park2018bayesian}. However, the DMH algorithm becomes computationally expensive for large data because auxiliary variable sampling requires a longer chain with increasing data space dimension.

\subsection{Delayed Acceptance MCMC}\label{subsec:DA}

The DA-MCMC method \citep{christen2005markov} can accelerate the MH algorithm, particularly when the likelihood evaluation is computationally expensive. Given a current $\bm{\theta}$, a candidate  $\bm{\theta}^*$ is proposed from $q(\bm{\theta}^* | \bm{\theta})$. Then, the acceptance probability of the first stage of the algorithm is  
\begin{equation}
\alpha_{\mathrm{DA1}}(\bm \theta \rightarrow \bm \theta^*) = \min \left\{ 1, 
\frac{\widehat{\pi}(\bm{\theta}^*|\mathbf x) q(\bm{\theta}| \bm{\theta}^*)}{\widehat{\pi}(\bm{\theta}|\mathbf x) q(\bm{\theta}^* | \bm{\theta})}
\right\},
\label{DAMCMCacc1}
\end{equation}
where $\widehat{\pi}(\bm{\theta}^*|\mathbf x)$ is a computationally cheap approximation to $\pi(\bm{\theta}^*|\mathbf x)$. A variety of DA-MCMC algorithms have been developed by constructing $\widehat{\pi}(\bm{\theta}|\mathbf x)$ through a divide-and-conquer strategy \citep{banterle2015accelerating}, adaptive k-nearest neighbors \citep{sherlock2017adaptive} and subsampling strategy \citep{banterle2015accelerating,quiroz2018speeding}. If $\bm{\theta}^*$ is accepted from \eqref{DAMCMCacc1}, the second stage acceptance probability is computed as follows: 
\begin{equation}
\alpha_{\mathrm{DA2}}(\bm \theta \rightarrow \bm \theta^*) = \min \left\{ 1, 
\frac{\pi(\bm{\theta}^*|\mathbf x)\widehat{\pi}(\bm{\theta}|\mathbf x)}{\pi(\bm{\theta}|\mathbf x)\widehat{\pi}(\bm{\theta}^*|\mathbf x)}
\right\}.
\label{DAMCMCacc2}
\end{equation}
The overall acceptance probability $\alpha_{\mathrm{DA1}}\alpha_{\mathrm{DA2}}$ satisfies the detailed balance condition with respect to $\pi(\bm{\theta}|\mathbf x)$. Since the procedure early rejects $\bm{\theta}^*$ without computing expensive $\pi(\bm{\theta}|\mathbf x)$, the DA-MCMC algorithm can explore the parameter space more effectively. The DA algorithm targets the posterior distribution $\pi(\bm{\theta}|\mathbf{x})$, with the Markov transition kernel
\begin{equation}
    K_{\mathrm{DA}}(\bm{\theta}, \mathbf{A}) = \int_{\mathbf{A}}q(\bm{\theta}^*|\bm{\theta}) \alpha_{\mathrm{DA1}}(\bm{\theta}\rightarrow\bm{\theta}^*) \alpha_{\mathrm{DA2}}(\bm{\theta}\rightarrow\bm{\theta}^*)d\bm{\theta}^* + (1-r_{\mathrm{DA}}(\bm{\theta}))\mathbf{1}_{\mathbf{A}}(\bm{\theta}),
    \label{kernel_DA}
\end{equation}
    where $\mathbf{A} \subseteq \mathcal{B}(\bm{\Theta})$ is a measurable set in the Borel $\sigma$-algebra on the parameter space, and $$r_{\mathrm{DA}}(\bm{\theta})=\int_\Theta q(\bm{\theta}^*|\bm{\theta})\alpha_{\mathrm{DA1}}(\bm{\theta}\rightarrow\bm{\theta}^*)\alpha_{\mathrm{DA2}}(\bm{\theta}\rightarrow\bm{\theta}^*)d\bm{\theta}^*.$$ Here, $\alpha_{\mathrm{DA1}}(\bm{\theta} \rightarrow \bm{\theta}^*)$ and $\alpha_{\mathrm{DA2}}(\bm{\theta} \rightarrow \bm{\theta}^*)$ denote the acceptance probabilities defined in \eqref{DAMCMCacc1} and \eqref{DAMCMCacc2}, respectively. Note that $1-r_{\mathrm{DA}}(\bm{\theta})$ is the overall rejection probability, which can be decomposed into two parts: rejection at the first stage, $1-\int_\Theta q(\bm{\theta}^*|\bm{\theta})\alpha_{\mathrm{DA1}}(\bm{\theta}\rightarrow\bm{\theta}^*)d\bm{\theta}^*$ and rejection at the second stage after passing the first, $\int_\Theta q(\bm{\theta}^*|\bm{\theta})\alpha_{\mathrm{DA1}}(\bm{\theta}\rightarrow\bm{\theta}^*)[1-\alpha_{\mathrm{DA2}}(\bm{\theta}\rightarrow\bm{\theta}^*)]d\bm{\theta}^*$. The sum of these two terms equals $1-r_{\mathrm{DA}}(\bm{\theta})$. However, direct application of the DA algorithm is infeasible since $\alpha_{\mathrm{DA2}}(\bm{\theta}\rightarrow\bm{\theta}^*)$ involves $Z(\bm{\theta})$, which is intractable.  

Due to its flexibility, the DA procedures have also been studied in the approximate Bayesian computation (ABC) literature \citep{Beaumont2002} when the likelihood evaluation is intractable. Given $\bm{\theta}^* \sim q(\bm{\theta}^* | \bm{\theta})$, the ABC methods simulate synthetic data from the probability model. If the discrepancy between the synthetic data and the observed data is small, $\bm{\theta}^*$ is accepted and is used to approximate $\pi(\bm{\theta}|\mathbf x)$. \cite{everitt2021delayed} incorporated DA-MCMC into the ABC sequential Monte Carlo (ABC-SMC) to reduce the expensive synthetic data simulation from the probability model. Recently, \cite{cao2024using} proposed an early rejection algorithm based on the Gaussian process discrepancy model. Motivated by these recent approaches, we propose a DA-AVM in the following section.

\section{Delayed Acceptance Auxiliary Variable MCMC}
\label{sec:3}

In this section, we describe a general framework for DA-AVM for intractable spatial models. 

\subsection{DA-AVM Algorithms}

The DA-AVM algorithm is computationally efficient compared to the standard AVM by reducing the number of auxiliary variable simulations through the initial screening step. The general form of the first stage kernel can be defined as \eqref{DAMCMCacc1}. In Section~\ref{firstkernel}, we provide details for constructing the first stage kernel. Specifically, we construct a computationally efficient surrogate $\widehat{\pi}(\bm{\theta}|\mathbf{x})$ using subsampling, function emulation, and frequentist estimator–based approximation. Once $\bm{\theta}^*$ is accepted in the first stage, we generate $\mathbf{y} \sim h(\mathbf{y}|\bm{\theta}^*)/Z(\bm{\theta}^*)$ in the second stage. 
From this procedure, we can avoid the simulation of the auxiliary variable if $\bm{\theta}^*$ belongs to the implausible region of the parameter space. Note that the efficiency of the DA-AVM is affected by the surrogate model construction. If $\widehat{\pi}(\bm{\theta}|\mathbf x)$ cannot approximate the true $\pi(\bm{\theta}|\mathbf x)$ well, the algorithm can reject a good candidate in the first stage, resulting in the slow mixing of the chain. Furthermore, if $\widehat{\pi}(\bm{\theta}|\mathbf x)$ is non-informative (i.e., too flat), most proposals are likely to be accepted in the first stage; therefore, the computational savings are marginal. In Section~\ref{sec:simulation}, we compare the efficiency of different surrogate models and discuss practical implementation issues.

\subsection{First Stage Kernel Construction}\label{firstkernel}
\subsubsection{Subsampling Strategy}\label{DAsub} 

There have been several proposals to construct the first stage kernel using subsampling strategies. For instance, \cite{banterle2015accelerating} split the Metropolis-Hastings acceptance step into multiple components and evaluated them sequentially to allow early rejection. \cite{quiroz2018speeding} approximated the likelihood based on a random subsample in the first stage of the DA-MCMC algorithm and reduced the variance of the approximated likelihood using control variates. In a similar fashion, we propose $\text{DA-AVM}_{\mathrm{S}}$ based on a subsampling strategy.

When subsampling spatial data, it is important to preserve the local spatial dependence structure. To achieve this, we sample $\mathbf{x}_{\mathrm{sub}} \in \mathcal{X}_{\mathrm{sub}} \subset \mathcal{X}$, where $\mathcal{X}_{\mathrm{sub}}$ denotes a subregion of the data space; for each iteration in the first stage kernel, a subset of the data is selected. Given a generated $\bm{\theta}^*$, an auxiliary variable $\mathbf{y}_{\mathrm{sub}}$ is sampled from $h(\mathbf{y}_{\mathrm{sub}}|\bm\theta^*)/Z(\bm{\theta}^*)$. The acceptance probability of the first stage kernel with the subset of the dataset is 
\begin{equation}
    \alpha_{\mathrm{S1}}(\bm{\theta} \rightarrow \bm{\theta}^* ; \mathbf{y}_{\mathrm{sub}})=\min\left\{1, \frac{p(\bm\theta^*)h(\mathbf x_{\mathrm{sub}}| \bm \theta^*)h(\mathbf y_{\mathrm{sub}}| \bm \theta) q(\bm\theta | \bm\theta^*)}{p(\bm\theta)h(\mathbf x_{\mathrm{sub}}|\bm \theta)h(\mathbf y_{\mathrm{sub}}| \bm \theta^*) q(\bm\theta^* | \bm\theta)} \right\}.
    \label{subsam_alpha1}
\end{equation}
Since $\mathbf{y}_{\mathrm{sub}}$ has the same dimension as $\mathbf{x}_{\mathrm{sub}}$, which is much smaller than that of $\mathbf{x}$, the auxiliary variable simulation becomes much faster. The length of the inner sampler for generating $\mathbf{y}_{\mathrm{sub}}$ can be substantially shorter than that for $\mathbf{y}$. Once $\bm{\theta}^*$ is accepted, we generate an auxiliary variable $\mathbf{y} \sim h(\mathbf{y}|\bm\theta^*)/Z(\bm{\theta}^*)$ and the acceptance probability of the second stage kernel becomes 
\begin{equation}
     \alpha_{\mathrm{S2}}(\bm{\theta} \rightarrow \bm{\theta}^* ; \mathbf{y})=\min \left\{ 1, \frac{ h(\mathbf{x}_{\mathrm{sub}} | \bm{\theta}) h(\mathbf y_{\mathrm{sub}} | \bm \theta^*)h(\mathbf x|\bm \theta^*)h(\mathbf y| \bm \theta)}{h(\mathbf{x}_{\mathrm{sub}} | \bm{\theta^*}) h(\mathbf{y}_{\mathrm{sub}} | \bm \theta)h(\mathbf x| \bm \theta)h(\mathbf y | \bm \theta^*)}\right\}.    
\end{equation}
An advantage of the proposed methodology is that it requires fewer components to be tuned in surrogate model construction compared to other approaches. Once an inner sampler for generating auxiliary variables from the probability model is available, only minor adjustments are needed to generate $\mathbf{y}_{\mathrm{sub}} \in \mathcal{X}_{\mathrm{sub}}$. We provide algorithm details for $\text{DA-AVM}_{\mathrm{S}}$ in \ref{algo} (Algorithm~\ref{daavsp}).

The efficiency of the algorithm depends on the size of $\mathbf{x}_{\mathrm{sub}}$. If $\mathbf{x}_{\mathrm{sub}}$ is too small, the approximate posterior in the first stage becomes non-informative, leading to most proposals being accepted in the first stage. Consequently, auxiliary variable simulations must be performed twice (in both the first and second stages), and the computational savings may become negligible. In Section~\ref{sec:simulation}, we observe that using a subset approximately one-fourth the size of the full data is efficient, particularly in cases such as point process models where the computational complexity of the inner sampler is quadratic.

\subsubsection{Function Emulation Approach}
\label{fesection}

Gaussian process emulations have been widely used to accelerate inference for models with intractable likelihood functions \citep{drovandi2018accelerating, park2020function, vu2023warped}. In this work, we utilize a function emulation approach \citep{park2020function} to construct the first stage kernel of DA-AVM. 

Let $\bm{\theta}^{(1)},\cdots,\bm{\theta}^{(d)}$ denote the $d$ particles that cover $\bm\Theta \subset \mathbb R^{p}$. As $p$ increases, the particles must be carefully designed to cover the important region of $\bm{\Theta}$. Otherwise, a substantially larger number of particles $d$ would be required, which can affect computational efficiency. Following \cite{park2020function}, we construct the particles by using the ABC algorithm or the short run of the AVM algorithm. The logarithm of the importance sampling estimate at $\bm\theta^{(i)}$ is 
\begin{equation}
\log \widehat{Z}_{\mathrm{IS}}(\bm\theta^{(i)}) = \log \left( \frac{1}{N} \sum_{l=1}^{N} \frac{h(\mathbf{x}_l| \bm\theta^{(i)})}{h(\mathbf{x}_l | \widetilde{\bm\theta}) }\right),
\label{importanceest}
\end{equation}
where $\lbrace \mathbf{x}_l \rbrace_{l=1}^{N}$ are samples generated from a Markov chain whose stationary distribution is $h(\cdot | \widetilde{\bm\theta}) / Z(\widetilde{\bm\theta})$. Here, $\widetilde{\bm\theta}$ can be an approximation to the MLE or the maximum pseudo-likelihood estimator (MPLE). Let $\log \mathbf{\widehat{Z}_{\mathrm{IS}}}=(\log \widehat{Z}_{\mathrm{IS}}(\bm\theta^{(1)}),\cdots,\log \widehat{Z}_{\mathrm{IS}}(\bm\theta^{(d)}))' \in \mathbb R^{d}$ be a vector of the log importance sampling estimates evaluated at each particle. Then we can define a Gaussian process regression model as
\begin{equation}
\log \mathbf{\widehat{Z}_{\mathrm{IS}}} = \bm{\Psi} \bm\beta + \mathbf{W},
\label{gpest}
\end{equation}
where $\bm{\Psi} \in \mathbb R^{d \times p}$ is the design matrix whose rows consist of the particles, and $\bm{\beta} \in \mathbb{R}^{p}$ denotes the regression coefficients. In \eqref{gpest}, $\mathbf{W} \in \mathbb R^{d}$ follows a zero-mean second order stationary Gaussian process with a Mat\'{e}rn class \citep{stein2012interpolation} covariance function defined as
\begin{equation}
\mathbf{K}(\bm\theta^{(i)}, \bm\theta^{(j)}; \sigma^2, \phi, \tau^2) = \sigma^2\Big(  1 + \frac{\sqrt{3}\|\bm{\theta}^{(i)}-\bm{\theta}^{(j)}\|}{\phi} \Big) \exp{\Big( -\frac{\sqrt{3}\|\bm{\theta}^{(i)}-\bm{\theta}^{(j)}\|}{\phi}   \Big)} + \tau^2 1_{\lbrace i = j \rbrace}.
\label{matern}
\end{equation}	
Here, $\sigma^2$, $\phi$, and $\tau$ denote the partial sill, the range parameter, and the nugget, respectively. To interpolate $\log \widehat{Z}_{\mathrm{IS}}(\bm{\theta}^{\ast})$ for an arbitrary $\bm{\theta}^{\ast} \in \mathbb R^{p}$, we define the joint distribution as
\begin{equation}
\begin{bmatrix}
\log \widehat{\mathbf{Z}}_{\mathrm{IS}} \\
\log \widehat{Z}_{\mathrm{IS}}(\bm{\theta}^{\ast})
\end{bmatrix}
=
N \Bigg(
\begin{bmatrix}
\bm{\Psi} \bm{\beta} \\
\bm{\theta}^{\ast '} \bm{\beta}
\end{bmatrix},
\begin{bmatrix}
\mathbf{C} & \mathbf{c} \\
\mathbf{c}' & \sigma^2 + \tau^2
\end{bmatrix}
\Bigg),
\end{equation}
with $\mathbf{C}=\mathbf{K}(\bm{\Psi}, \bm{\Psi}; \sigma^2, \phi,\tau) \in \mathbb{R}^{d \times d}$ and $\mathbf{c}=\mathbf{K}(\bm{\Psi}, \bm{\theta}^{\ast}; \sigma^2, \phi,\tau) \in \mathbb{R}^{d \times 1}$. Then the conditional distribution of $\log \widehat{Z}_{\mathrm{IS}}(\bm{\theta}^{\ast})$ given  $\log \mathbf{\widehat{Z}_{\mathrm{IS}}}$ is
\begin{equation}
\log \widehat{Z}_{\mathrm{IS}}(\bm{\theta}^{\ast})|\log \mathbf{\widehat{Z}_{\mathrm{IS}}} \sim N( \bm{\theta}^{\ast '}\bm{\beta}  + \mathbf{c}'\mathbf{C}^{-1}(\mathbf{\log\widehat{Z}_{\mathrm{IS}}}-\bm{\Psi}\bm{\beta} ), \sigma^{2}+\tau^2-\mathbf{c}'\mathbf{C}^{-1}\mathbf{c}).
\label{condexp}
\end{equation}
We obtain the empirical best linear unbiased predictor (EBLUP) for $\log \widehat{Z}_{\mathrm{IS}}(\bm{\theta}^{\ast})$ as
\begin{equation}
\log \widehat{Z}_{\mathrm{GP}}(\bm{\theta}^{\ast}) =\bm{\theta}^{\ast '}\widehat{\bm{\beta}} + \widehat{\mathbf{c}}'\widehat{\mathbf{C}}^{-1}(\mathbf{\log \widehat{Z}_{\mathrm{IS}}} -\bm{\Psi}\widehat{\bm{\beta}})  
\label{BLUP1}
\end{equation}
by plugging in the estimates of the covariance parameters $(\sigma^2, \phi, \tau)$ and the GLS estimate $\widehat{\bm{\beta}}$. In our study, we use the {\tt DiceKriging} package \citep{roustant2012dicekriging} to fit a Gaussian process regression model. 

We propose $\text{DA-AVM}_{\mathrm{GP}}$ by constructing the surrogate as $\widehat{\pi}_{\mathrm{GP}}(\bm{\theta}|\mathbf x) \propto p(\bm{\theta})h(\mathbf{x}|\bm{\theta})/\widehat{Z}_{\mathrm{GP}}(\bm\theta)$. Then, the acceptance probability of the first stage kernel is 
\begin{equation}
    \alpha_{\mathrm{GP1}}(\bm \theta \rightarrow \bm{\theta}^*)=\min\left\{1, \frac{\widehat{\pi}_{\mathrm{GP}}(\bm{\theta}^*|\mathbf x)q(\bm{\theta}| \bm{\theta}^*)}{\widehat{\pi}_{\mathrm{GP}}(\bm{\theta}|\mathbf x) q(\bm{\theta}^* | \bm{\theta})} \right\}.
    \label{gp_alpha1}
\end{equation}
Once fitted, the Gaussian process emulation can evaluate \eqref{gp_alpha1} very quickly. Note that the Gaussian process emulator is precomputed prior to running the MCMC algorithm. To reduce the computational cost, parallel computation is employed to construct the importance sampling estimate in \eqref{importanceest}. Subsequently, fitting the EBLUP takes only a few seconds. Once $\bm{\theta}^*$ is accepted, we generate $\mathbf{y} \sim h(\mathbf{y}|\bm{\theta}^*)/Z(\bm{\theta}^*)$. Then the MH acceptance probability of the second stage becomes
\begin{equation}
    \alpha_{\mathrm{GP2}}(\bm{\theta} \rightarrow \bm{\theta}^* ; \mathbf{y})=\min \left\{ 1, \frac{p(\bm{\theta}^*) h(\mathbf{x} | \bm{\theta}^*)h(\mathbf{y} | \bm{\theta})\widehat{\pi}_{\mathrm{GP}}(\bm{\theta}|\mathbf x)}
{p(\bm{\theta}) h(\mathbf{x} | \bm{\theta})h(\mathbf{y} | \bm{\theta}^*)\widehat{\pi}_{\mathrm{GP}}(\bm{\theta}^*|\mathbf x)}\right\}.
\label{gp_alpha2}
\end{equation}
We provide algorithm details for $\text{DA-AVM}_{\mathrm{GP}}$ in \ref{algo} (Algorithm~\ref{daavgp}).

\subsubsection{Frequentist Estimator-Based Approximation}

Frequentist computational methods have been developed for several classes of spatial models, including lattice models \citep{potts1952some} and ERGMs \citep{robins2007introduction}. We construct the first stage kernel of DA-AVM based on such frequentist estimators.

The pseudo-likelihood approach \citep{besag1974spatial} approximates the likelihood function using a simplified form by ignoring certain levels of spatial dependencies. Specifically, the logarithm of the pseudo-likelihood function is defined as
\begin{equation}
\log \text{PL}(\bm{\theta};\mathbf{x}) = \sum_{i=1}^{n} \log p(x_i|\mathbf{x}_{-i},\bm{\theta}),
\label{pseudolik}
\end{equation}
where $p(x_i|\mathbf{x}_{-i},\bm{\theta})$ is a full conditional distribution. Since \eqref{pseudolik} does not involve the intractable normalizing function $Z(\bm{\theta})$, the MPLE can be easily obtained. The MPLE can be a practical option when the spatial dependency among $\mathbf{x}$ is relatively weak. Alternatively, the Monte Carlo maximum likelihood (MCML) method \citep{geyer1992constrained} has been applied to a wide variety of applications. Based on the importance sampling estimate \eqref{importanceest}, the Monte Carlo maximum likelihood estimator (MCMLE) can be obtained by maximizing the following approximated likelihood function:
\begin{equation}
\log \widehat{L}(\bm{\theta};\mathbf{x}) = \log h(\mathbf{x}|\bm{\theta}) - \log \widehat{Z}_{\mathrm{IS}}(\bm\theta).
\label{MCMLlik}
\end{equation}
If the analytical gradient of $h(\mathbf{x}|\bm{\theta})$ is available, as in ERGMs or spatial lattice models, the MCMLE can be obtained efficiently. In general, the MCMLE provides more accurate inference results than the MPLE because \eqref{MCMLlik} does not ignore spatial dependencies.

We propose $\text{DA-AVM}_{\mathrm{F}}$ by constructing the surrogate based on the asymptotic distribution obtained from frequentist estimators (i.e., the MPLE or MCMLE). Specifically, $\widehat{\pi}_\mathrm{F}(\bm{\theta}|\mathbf{x})$ is obtained as the density of a normal distribution with the mean given by the MPLE or MCMLE and the covariance given by the corresponding observed Fisher information. Then, the acceptance probability of the first stage kernel is 
\begin{equation}
    \alpha_{\mathrm{F1}}(\bm \theta \rightarrow \bm \theta^*)=\min\left\{1, \frac{\widehat{\pi}_\mathrm{F}(\bm{\theta}^*|\mathbf x) q(\bm{\theta}| \bm{\theta}^*)}{\widehat{\pi}_\mathrm{F}(\bm{\theta}|\mathbf x) q(\bm{\theta}^* | \bm{\theta})} \right\}.
    \label{freq_alpha1}
\end{equation} 
Once $\bm{\theta}^*$ is accepted, we generate $\mathbf{y} \sim h(\mathbf{y}|\bm{\theta}^*)/Z(\bm{\theta}^*)$. Similarly, the MH acceptance probability of the second stage kernel becomes
\begin{equation}
     \alpha_{\mathrm{F2}}(\bm{\theta} \rightarrow \bm{\theta}^* ; \mathbf{y})=\min \left\{ 1, \frac{p(\bm{\theta}^*) h(\mathbf{x} | \bm{\theta}^*)h(\mathbf{y} | \bm{\theta})\widehat{\pi}_\mathrm{F}(\bm{\theta}|\mathbf x)}
{p(\bm{\theta}) h(\mathbf{x} | \bm{\theta})h(\mathbf{y} | \bm{\theta}^*)\widehat{\pi}_\mathrm{F}(\bm{\theta}^*|\mathbf x)}\right\}.
\label{freq_alpha2}
\end{equation} 
We provide algorithm details for $\text{DA-AVM}_{\mathrm{F}}$ in \ref{algo} (Algorithm~\ref{daavn}).

\subsection{Properties of DA-AVM}  

In this section, we show the theoretical properties of $\text{DA-AVM}_{\mathrm{F}}$. Similar results can also be derived in the context of the function emulation approach or the subsampling strategy. We first show that $\text{DA-AVM}_{\mathrm{F}}$ satisfies the detailed balance condition, ensuring that the stationary distribution induced by the $\text{DA-AVM}_{\mathrm{F}}$ algorithm is identical to that of the standard AVM. Let $\alpha_{\mathrm{F1}}$ and $\alpha_{\mathrm{F2}}$ denote the acceptance probabilities associated with the first and second stages of the kernel, as defined in \eqref{freq_alpha1} and \eqref{freq_alpha2}, respectively. Then, the detailed balance condition with respect to the marginal posterior distribution holds as follows:
\begin{align}
\pi(\bm{\theta}| \mathbf{x})&\overbrace{q(\bm{\theta}^*|\bm{\theta})\alpha_{\mathrm{F1}}(\bm{\theta} \rightarrow \bm{\theta}^*)}^{\text{first stage}}\overbrace{\mathbf{E}_{\mathbf{y}}\left[\alpha_{\mathrm{F2}}(\bm{\theta} \rightarrow \bm{\theta}^* ; \mathbf{y})\right]}^{\text{second stage}}\nonumber\\
&=\pi(\bm{\theta}| \mathbf{x})q(\bm{\theta}^*|\bm{\theta})\alpha_{\mathrm{F1}}(\bm{\theta} \rightarrow \bm{\theta}^*)\int{\alpha_{\mathrm{F2}}(\bm{\theta} \rightarrow \bm{\theta}^* ; \mathbf{y})}\frac{h(\mathbf{y}|\bm{\theta}^*)}{Z(\bm{\theta}^*)} d\mathbf{y}\nonumber\\
&=\pi(\bm{\theta}|\mathbf{x}) q(\bm{\theta}^* | \bm{\theta}) \int \min\left\{1,\frac{ \widehat{\pi}_\mathrm{F}(\bm{\theta}^* | \mathbf{x}) q(\bm{\theta} | \bm{\theta}^*)}{ \widehat{\pi}_\mathrm{F}(\bm{\theta} | \mathbf{x}) q(\bm{\theta}^* | \bm{\theta})}\right\} \nonumber\\
&\quad \times \min\left\{1,\frac{h(\mathbf{y}|\bm{\theta})p(\bm \theta^*)h(\mathbf{x} | \bm{\theta}^*) \widehat{\pi}_\mathrm{F}(\bm{\theta} | \mathbf{x})}{h(\mathbf{x}|\bm{\theta})p(\bm \theta)  h(\mathbf{y} | \bm{\theta}^*) \widehat{\pi}_\mathrm{F}(\bm{\theta}^* | \mathbf{x})} \right\}  \frac{h(\mathbf{y}|\bm{\theta}^*)}{Z(\bm{\theta}^*)} d\mathbf{y} \nonumber\\
&=\pi(\bm{\theta}^*|\mathbf{x}) q(\bm{\theta} | \bm{\theta}^*) \int\min\left\{1,\frac{ \widehat{\pi}_\mathrm{F}(\bm{\theta} | \mathbf{x}) q(\bm{\theta}^* | \bm{\theta})}{ \widehat{\pi}_\mathrm{F}(\bm{\theta}^* | \mathbf{x}) q(\bm{\theta} | \bm{\theta}^*)}\right\} \nonumber\\
&\quad \times \min\left\{1,\frac{h(\mathbf{y}^*|\bm{\theta}^*)p(\bm \theta)h(\mathbf{x} | \bm{\theta}) \widehat{\pi}_\mathrm{F}(\bm{\theta}^* | \mathbf{x})}{h(\mathbf{x}|\bm{\theta}^*)p(\bm \theta^*)  h(\mathbf{y}^* | \bm{\theta}) \widehat{\pi}_\mathrm{F}(\bm{\theta} | \mathbf{x})} \right\}  \frac{h(\mathbf{y}^*|\bm{\theta})}{Z(\bm{\theta})} d\mathbf{y}^* \nonumber\\
&=\pi(\bm{\theta}^*| \mathbf{x})q(\bm{\theta}|\bm{\theta}^*)\alpha_{\mathrm{F1}}(\bm{\theta}^* \rightarrow \bm{\theta})\int{\alpha_{\mathrm{F2}}(\bm{\theta} \rightarrow \bm{\theta}^* ; \mathbf{y})\frac{h(\mathbf{y}^*|\bm{\theta})}{Z(\bm{\theta})} d\mathbf{y}^*}\nonumber \\
&=\pi(\bm{\theta}^*| \mathbf{x})q(\bm{\theta}|\bm{\theta}^*)\alpha_{\mathrm{F1}}(\bm{\theta}^* \rightarrow \bm{\theta}) \mathbf{E}_{\mathbf{y}^*} \left[\alpha_{\mathrm{F2}}(\bm{\theta} \rightarrow \bm{\theta}^* ; \mathbf{y})\right]
\label{detailed}
\end{align}
As previously discussed, when the auxiliary variable is approximately generated from the probability model using a standard MCMC sampler (i.e., the inner sampler), the AVM targets an approximation to the joint posterior. Since perfect sampling is not available for many spatial models, we generate the auxiliary variable through an MCMC sampler; therefore, the stationary distribution induced by the DA-AVM algorithm is also the approximation of the joint posterior.  

While detailed balance ensures that the Markov chain has the correct stationary distribution, ergodicity guarantees that the chain will converge to this stationary distribution regardless of the initial state. In Theorem~\ref{thm1}, we show the ergodicity of the $\text{DA-AVM}_{\mathrm{F}}$ algorithm.


\begin{theorem}
\label{thm1}
Let $K_{\mathrm{DA-AVM}}(\cdot, \cdot)$ and $K_{\mathrm{AVM}}(\cdot, \cdot)$ denote the Markov transition kernels for $\text{DA-AVM}_{\mathrm{F}}$ and AVM, respectively. Suppose that $K_{\mathrm{AVM}}(\cdot, \cdot)$ is $\pi$-irreducible, the proposal $q(\cdot|\cdot)$ is reversible, and $q(\bm{\theta}^*|\bm{\theta})>0$ implies $\widehat{\pi}_\mathrm{F}(\bm{\theta}^*|\mathbf{x})>0$. If $K_{\mathrm{AVM}}(\bm{\theta}, \bm{\theta})>0$ implies $K_{\mathrm{DA-AVM}}(\bm{\theta}, \bm{\theta})>0$, then $K_{\mathrm{DA-AVM}}(\cdot, \cdot)$ is ergodic.  
\end{theorem}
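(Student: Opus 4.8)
The plan is to establish ergodicity by verifying the three standard ingredients of the convergence theorem for Metropolis--Hastings-type chains (e.g., Tierney, 1994): an invariant distribution, $\pi$-irreducibility, and aperiodicity. The invariant distribution is already in hand, since the chain of equalities culminating in \eqref{detailed} shows that $\text{DA-AVM}_{\mathrm{F}}$ satisfies detailed balance with respect to $\pi(\bm{\theta}|\mathbf{x})$, so $\pi$ is stationary for $K_{\mathrm{DA-AVM}}$. It then remains to transfer $\pi$-irreducibility from $K_{\mathrm{AVM}}$ to $K_{\mathrm{DA-AVM}}$ and to verify aperiodicity, after which the cited theorem yields $\|K^n_{\mathrm{DA-AVM}}(\bm{\theta},\cdot)-\pi\|_{\mathrm{TV}}\to 0$ for $\pi$-almost every $\bm{\theta}$.

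For irreducibility, I would compare the off-diagonal (jump) parts of the two marginal kernels in \eqref{kernel_AVM} and \eqref{kernel_DA} and prove a support-domination relation: the AVM jump density being positive implies the $\text{DA-AVM}_{\mathrm{F}}$ jump density is positive. Writing the latter as $q(\bm{\theta}^*|\bm{\theta})\,\alpha_{\mathrm{F1}}(\bm{\theta}\to\bm{\theta}^*)\int\alpha_{\mathrm{F2}}(\bm{\theta}\to\bm{\theta}^*;\mathbf{y})\,h(\mathbf{y}|\bm{\theta}^*)/Z(\bm{\theta}^*)\,d\mathbf{y}$, I would argue each factor is positive whenever the AVM jump density is. The proposal factor $q(\bm{\theta}^*|\bm{\theta})$ is common to both. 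The first-stage factor $\alpha_{\mathrm{F1}}$ does not depend on $\mathbf{y}$ and, by \eqref{freq_alpha1}, is positive as soon as its numerator $\widehat{\pi}_{\mathrm F}(\bm{\theta}^*|\mathbf{x})\,q(\bm{\theta}|\bm{\theta}^*)$ is positive; this is guaranteed by the hypothesis $q(\bm{\theta}^*|\bm{\theta})>0\Rightarrow\widehat{\pi}_{\mathrm F}(\bm{\theta}^*|\mathbf{x})>0$ together with reversibility of $q$ (which gives $q(\bm{\theta}|\bm{\theta}^*)>0$). For the second-stage factor, I would compare the positivity sets of $\alpha_{\mathrm{F2}}$ in \eqref{freq_alpha2} and $\alpha_{\mathrm{AVM}}$ in \eqref{auxMCMCacc}: both are governed by the same term $p(\bm{\theta}^*)h(\mathbf{x}|\bm{\theta}^*)h(\mathbf{y}|\bm{\theta})$, while the remaining factors $\widehat{\pi}_{\mathrm F}(\bm{\theta}|\mathbf{x})$ in $\alpha_{\mathrm{F2}}$ and $q(\bm{\theta}|\bm{\theta}^*)$ in $\alpha_{\mathrm{AVM}}$ are strictly positive ($\widehat{\pi}_{\mathrm F}$ is a normal density, hence positive throughout $\bm{\Theta}$, and $q(\bm{\theta}|\bm{\theta}^*)>0$ by reversibility). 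Consequently $\{\mathbf{y}:\alpha_{\mathrm{F2}}>0\}=\{\mathbf{y}:\alpha_{\mathrm{AVM}}>0\}$, so the $\mathbf{y}$-integral is positive exactly when the corresponding AVM integral is. This delivers the support domination, and since $K_{\mathrm{AVM}}$ is assumed $\pi$-irreducible, $K_{\mathrm{DA-AVM}}$ inherits $\pi$-irreducibility.

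Aperiodicity is the short step: the hypothesis that $K_{\mathrm{AVM}}(\bm{\theta},\bm{\theta})>0$ implies $K_{\mathrm{DA-AVM}}(\bm{\theta},\bm{\theta})>0$ transfers the positive holding mass of AVM (its rejection term $1-r_{\mathrm{AVM}}(\bm{\theta})$) to $\text{DA-AVM}_{\mathrm{F}}$, and a $\pi$-irreducible chain with positive probability of remaining in place on a set of positive $\pi$-measure cannot be periodic. Combining invariance, $\pi$-irreducibility, and aperiodicity, the standard ergodic theorem then gives convergence of $K_{\mathrm{DA-AVM}}$ to $\pi$ in total variation, as claimed.

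I expect the main obstacle to be the second-stage comparison in the irreducibility step: one must argue carefully that integrating $\alpha_{\mathrm{F2}}$ against $h(\mathbf{y}|\bm{\theta}^*)/Z(\bm{\theta}^*)$ is positive precisely on the AVM jump support, which hinges on the full support of the Gaussian surrogate $\widehat{\pi}_{\mathrm F}$ and on the role of reversibility in matching $q(\bm{\theta}^*|\bm{\theta})$ with $q(\bm{\theta}|\bm{\theta}^*)$. The remaining pieces reduce to invoking the detailed balance already established in \eqref{detailed} and a textbook ergodicity theorem.
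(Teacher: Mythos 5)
Your proposal is correct and follows essentially the same route as the paper's proof: it verifies $\pi$-irreducibility by showing the DA-AVM jump density is positive wherever the AVM jump density is (via positivity of $\alpha_{\mathrm{F1}}$ and $\alpha_{\mathrm{F2}}$), obtains aperiodicity from the hypothesis $K_{\mathrm{AVM}}(\bm{\theta},\bm{\theta})>0\Rightarrow K_{\mathrm{DA\text{-}AVM}}(\bm{\theta},\bm{\theta})>0$, and uses the detailed balance in \eqref{detailed} for reversibility before invoking the standard ergodicity theorem. The only difference is that you spell out the factor-by-factor positivity comparison that the paper summarizes as holding ``by construction,'' which is a useful elaboration rather than a departure.
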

\begin{proof}
The Markov transition kernel of the DA-AVM algorithm is obtained by composing the transition kernels of the DA and AVM algorithms. Specifically, in the first stage, the acceptance probability $\alpha_{\mathrm{F1}}(\bm{\theta} \to \bm{\theta}^*)$ inherited from \eqref{kernel_DA} acts as a screening step prior to generating the auxiliary variable $\mathbf{y}$. Conditional on passing the first stage, the second stage follows the marginal transition kernel of the AVM algorithm in \eqref{kernel_AVM}. Let $\mathbf{A} \subseteq \mathcal{B}(\bm{\Theta})$ be a measurable set, where $\mathcal{B}(\bm{\Theta})$ denotes the Borel $\sigma$-algebra on the parameter space. Then, the Markov transition kernel for DA-AVM can be defined as
\begin{align}
K_{\mathrm{DA-AVM}}(\bm{\theta}, \mathbf{A}) &= 
\overbrace{\int_{\mathbf{A}} \underbrace{q(\bm{\theta}^* | \bm{\theta}) \alpha_{\mathrm{F1}}(\bm{\theta} \rightarrow \bm{\theta}^*) }_\text{first stage} \underbrace{\int_{\mathcal{X}} \alpha_{\mathrm{F2}}(\bm{\theta} \rightarrow \bm{\theta}^* ; \mathbf{y})  \frac{h(\mathbf{y}|\bm{\theta}^*)}{Z(\bm{\theta}^*)}  d \mathbf{y}}_\text{second stage}  d\bm{\theta}^*}^\text{Delayed acceptance MH - accept} \nonumber \\
&\quad+ \overbrace{(1- r_{\mathrm{DA-AVM}}(\bm{\theta})) \mathbf{1}_{\mathbf{A}}(\bm{\theta})}^\text{Delayed acceptance MH - reject},
\end{align}
where 
\[
r_{\mathrm{DA-AVM}}(\bm{\theta})=\int_{\bm{\Theta}} q(\bm{\theta}^* | \bm{\theta}) \alpha_{\mathrm{F1}}(\bm{\theta} \rightarrow \bm{\theta}^*) \int_{\mathcal{X}}  \frac{h(\mathbf{y}|\bm{\theta}^*)}{Z(\bm{\theta}^*)} \alpha_{\mathrm{F2}}(\bm{\theta} \rightarrow \bm{\theta}^* ; \mathbf{y}) d \mathbf{y}  d\bm{\theta}^*.
\]
Here, $\alpha_{\mathrm{F1}}(\bm{\theta} \rightarrow \bm{\theta}^*)$ and $\alpha_{\mathrm{F2}}(\bm{\theta} \rightarrow \bm{\theta}^* ; \mathbf{y})$ denote the acceptance probabilities defined in \eqref{freq_alpha1} and \eqref{freq_alpha2}, respectively. To establish the ergodicity of $K_\mathrm{DA-AVM}(\cdot, \cdot)$, it is necessary to verify irreducibility, aperiodicity, and reversibility (see Corollary 2 in \cite{tierney1994markov} and Lemmas 1.1 and 1.2 in \cite{mengersen1996rates}).

\begin{itemize}
    \item Irreducibility: Since $K_{\mathrm{AVM}}(\cdot, \cdot)$ is assumed to be $\pi$-irreducible, there exists $n \in \mathbb{N}$ such that $K_{\mathrm{AVM}}^n(\bm{\theta}, \mathbf{A})>0$ for any $\bm{\theta} \in \bm{\Theta}$. This implies that $$q(\bm{\theta}^*|\bm{\theta}) \int_{\mathcal{X}}\alpha_{\mathrm{AVM}}(\bm{\theta} \rightarrow \bm{\theta}^* ; \mathbf{y}) 
    \frac{h(\mathbf{y}|\bm{\theta}^*)}{Z(\bm{\theta}^*)} d \mathbf{y}  > 0.$$ By construction, if $\alpha_{\mathrm{AVM}}(\bm{\theta} \rightarrow \bm{\theta}^* ; \mathbf{y})>0$, then  $\alpha_{\mathrm{F1}}(\bm{\theta} \rightarrow \bm{\theta}^*)\alpha_{\mathrm{F2}}(\bm{\theta} \rightarrow \bm{\theta}^* ; \mathbf{y})$ are strictly positive. Hence, $$q(\bm{\theta}^*|\bm{\theta}) \alpha_{\mathrm{F1}}(\bm{\theta} \rightarrow \bm{\theta}^*) \int_{\mathcal{X}}\alpha_{\mathrm{F2}}(\bm{\theta} \rightarrow \bm{\theta}^* ; \mathbf{y})\frac{h(\mathbf{y}|\bm{\theta}^*)}{Z(\bm{\theta}^*)} d\mathbf{y} $$ is strictly positive, implying that $K_{\mathrm{DA-AVM}}(\cdot|\cdot)$ is also $\pi$-irreducible. 
    \item Aperiodicity: Aperiodicity of the DA-AVM kernel is guaranteed by Theorem 1 of \cite{christen2005markov}, provided that $K_\mathrm{AVM}(\bm{\theta}, \bm{\theta})>0$ implies $K_\mathrm{DA-AVM}(\bm{\theta}, \bm{\theta})>0$. When $K_\mathrm{AVM}(\bm{\theta}, \bm{\theta})>0$, it follows that the rejection probability satisfies $r_\mathrm{AVM}(\bm \theta)<1$. Since, $\alpha_{\mathrm{AVM}}(\bm{\theta} \rightarrow \bm{\theta}^* ; \mathbf{y})>0$ implies $\alpha_{\mathrm{F1}}(\bm{\theta} \rightarrow \bm{\theta}^*)\alpha_{\mathrm{F2}}(\bm{\theta} \rightarrow \bm{\theta}^* ; \mathbf{y})>0$, the DA-AVM kernel also satisfies $r_{\mathrm{DA-AVM}}(\bm \theta)<1$, and thus $K_\mathrm{DA-AVM}(\bm \theta, \bm \theta)>0$. Therefore, the kernel $K_\mathrm{DA-AVM}(\cdot,\cdot)$ is aperiodic.
    \item Reversibility: $K_{\mathrm{DA-AVM}}(\cdot,\cdot)$ satisfies the detailed balance condition as we showed in \eqref{detailed}.
\end{itemize}  
\end{proof}

\begin{remark}
Theorem~1 explicitly requires that $q(\bm{\theta}^*|\bm{\theta})>0$ implies $\widehat{\pi}_\mathrm{F}(\bm{\theta}^*|\mathbf{x})>0$, to ensure that the delayed-acceptance step does not exclude admissible proposals and thus preserves irreducibility of the chain. 
\end{remark}

In practice, we construct $\widehat{\pi}_{\mathrm{F}}(\bm{\theta}|\mathbf{x})$ as a Gaussian density whose mean is given by a frequentist estimator (e.g., MPLE or MCMLE) and whose covariance is given by the corresponding observed Fisher information. This surrogate is precomputed before running the DA-AVM algorithm and kept fixed throughout the MCMC run. Since a Gaussian density with a positive definite covariance matrix is strictly positive on its support, the above condition is satisfied whenever the covariance estimate is positive definite.

\section{Applications}
\label{sec:simulation}


In this section, we apply the proposed DA methods to four intractable models: three spatial models—the Potts model, the interaction point process model, and the exponential random graph model (ERGM)—and one non-spatial model, the susceptible-infected-recovered (SIR) model. Each dataset used in the examples is illustrated in Figure~\ref{dataplot}. As described, we construct the first stage kernel based on subsampling, Gaussian process emulation, and frequentist estimators, which are denoted by $\text{DA-AVM}_{\mathrm{S}}$, $\text{DA-AVM}_{\mathrm{GP}}$, and $\text{DA-AVM}_{\mathrm{F}}$, respectively. To illustrate the performance of our approaches, we compare the DA-AVM methods with the standard AVM \citep{liang2010double}. Following \cite{cao2024using}, we use 
\[
\text{Eff} = \frac{\text{\# of early rejected parameters}}{\text{\# of rejected parameters}}
\]
to assess the efficiency of the DA-AVM methods. The efficiency value is bounded between 0 and 1, where a value of 1 represents the ideal case in which all rejected parameters are filtered out during the first stage of the algorithm. The code for the applications is implemented in {\tt R} and {\tt C++}, using {\tt Rcpp} and {\tt RcppArmadillo} \citep{eddelbuettel2011rcpp} packages. We use {\tt DiceKriging} package \citep{roustant2012dicekriging} to fit Gaussian process emulator for $\text{DA-AVM}_{\mathrm{GP}}$. All experiments were conducted on a machine equipped with an Apple M3 Pro chip (11-core CPU, 14-core GPU) and 18 GB of RAM, running macOS Sequoia 15.3.2. The source code can be downloaded from the following repository (\url{https://github.com/rlawhdals/DA-AVM}).

\begin{figure}[htbp!]
\centering
\subfloat[
Simulated Potts process with $\theta=0.8$.]{{\includegraphics[width=0.45\linewidth]{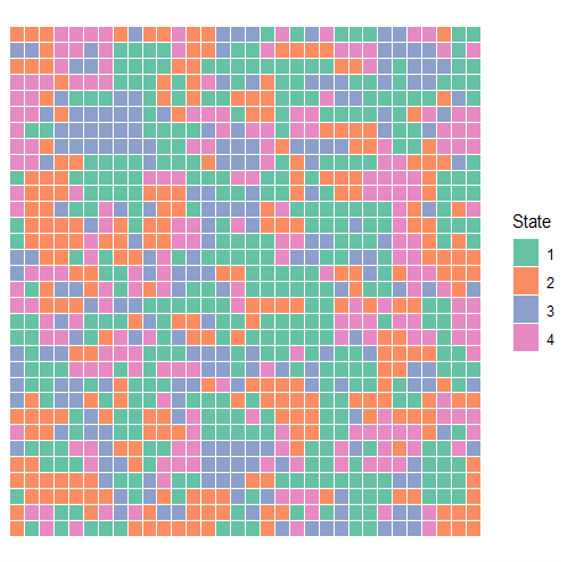} }}%
\hspace{0.05\linewidth}
\subfloat[RSV-A point pattern data collected from the $\mathbf{1A2A}$ experiment \citep{attrep}.]{{\includegraphics[width=0.45\linewidth ]{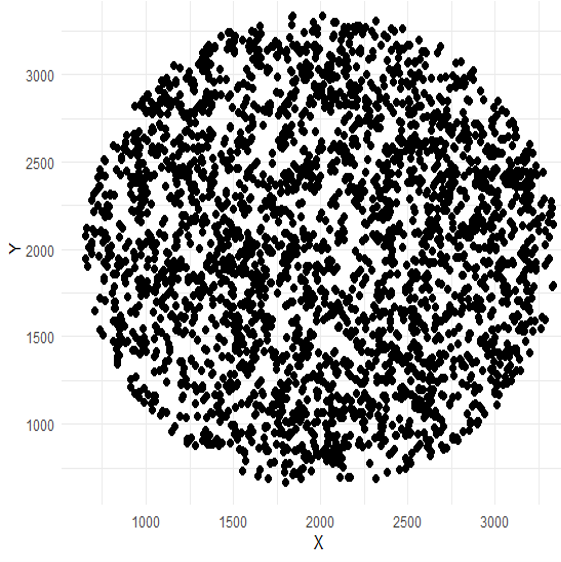} }}%
\hfill 
\subfloat[Faux Mesa high school network dataset \citep{goodreau2007advances, resnick1997protecting}]{{\includegraphics[width=0.45\linewidth ]{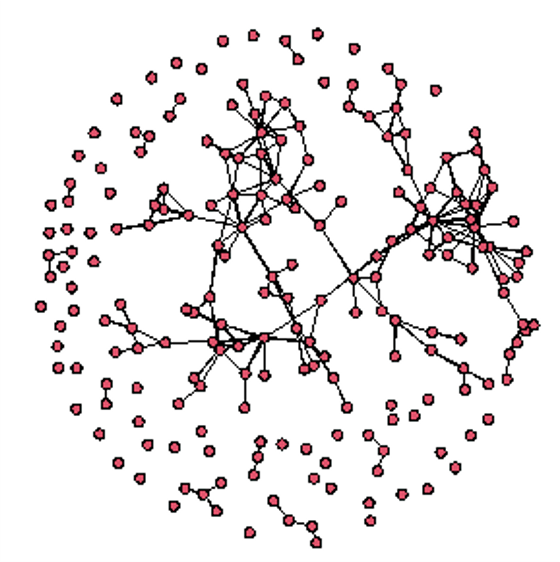}. }}%
\hspace{0.05\linewidth}
\subfloat[Weekly measles incidence data from Baltimore \citep{king2016statistical}.]{{\includegraphics[width=0.45\linewidth ]{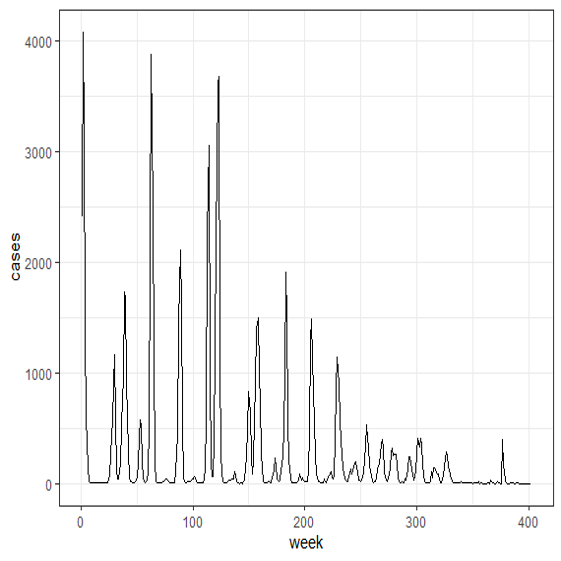} }}%
\caption{Data visualizations for each application.}%
\label{dataplot}%
\end{figure}

\subsection{A Potts Model}

The Potts models \citep{potts1952some} have been widely used to describe spatial interactions with multiple discrete states. For an observed $m \times m$ lattice $\mathbf{x} = \{ x_i \}$ with $x_i \in \{1, \dots, 4\}$, the probability model is 
\begin{equation}
\frac{1}{Z(\theta)}\exp\Big\lbrace \theta \sum_{i \sim j} \delta(x_i, x_j) \Big\rbrace, 
\label{pottseq}
\end{equation}
where $\delta(x_i, x_j)$ is a Kronecker delta function and $i \sim j$ denotes neighboring sites. Here, $\theta \in [0, 2]$ is a parameter that controls the spatial interaction; a larger value of $\theta$ implies a high expected number of neighboring pairs occupying the same state. In \eqref{pottseq}, the computation of $Z(\theta)$ requires summation over all $4^{m\times m}$ possible configurations, which is intractable. We simulate $\mathbf x$ on a $32 \times 32$ lattice with $\theta = 0.8$ using the {\tt potts} package. We use a uniform prior with a range $[0,2]$ for all methods. We run MCMC algorithms for 50,000 iterations until convergence and discard 10,000 samples for burn-in. We generate the auxiliary variable using 10 cycles (i.e., $10\times32\times32$ iterations) of the Gibbs sampler. 

Since the Gaussian process emulator is efficient for low-dimensional parameter problems, we implement $\text{DA-AVM}_{\mathrm{GP40}}$; ${\text{GP40}}$ indicates that the Gaussian process emulator was constructed using 40 particles. We generate particles by using the ABC algorithm described in \ref{algo} (Algorithm~\ref{ABCalg}). We use 1,000 samples to construct importance sampling estimates, and each sample is generated using 100 cycles of the Gibbs sampler. We also implement $\text{DA-AVM}_{\mathrm{F}}$ based on the MPLE and its associated standard error, which are computed using the {\tt potts} package.

\begin{table}[htbp]
  \centering
    \begin{tabular}{cccccc}
        \toprule
         Method & $\theta$ & Time (min) & \# AV simulations & Eff & ESS/Time\\
        \midrule
         AVM & \multicolumn{1}{c}{\begin{tabular}[c]{@{}c@{}}0.77\\ (0.70, 0.84)\end{tabular}} & 77.5 & 50,000 & - & 7.86\\
         $\text{DA-AVM}_{\mathrm{GP40}}$ & \multicolumn{1}{c}{\begin{tabular}[c]{@{}c@{}}0.77\\ (0.70, 0.84)\end{tabular}} & 50.4 & 29,081 & 0.70&10.09\\
         $\text{DA-AVM}_{\mathrm{F}}$ & \multicolumn{1}{c}{\begin{tabular}[c]{@{}c@{}}0.77\\ (0.70, 0.84)\end{tabular}} & 41.6 & 26,912 & 0.72&13.80\\
        \bottomrule
    \end{tabular}
    \caption{The posterior mean and 95\% HPD interval of $\theta$ for the Potts model on a $32 \times 32$ lattice. The simulated truth of $\theta=0.8$. 50,000 MCMC samples are generated from each method. For the DA-AVM methods, the reported computing times include the construction of the surrogate models.
    }
    \label{potts_table}
\end{table}

\begin{figure}[htbp!]
\centering
{{\includegraphics[width=1\linewidth]{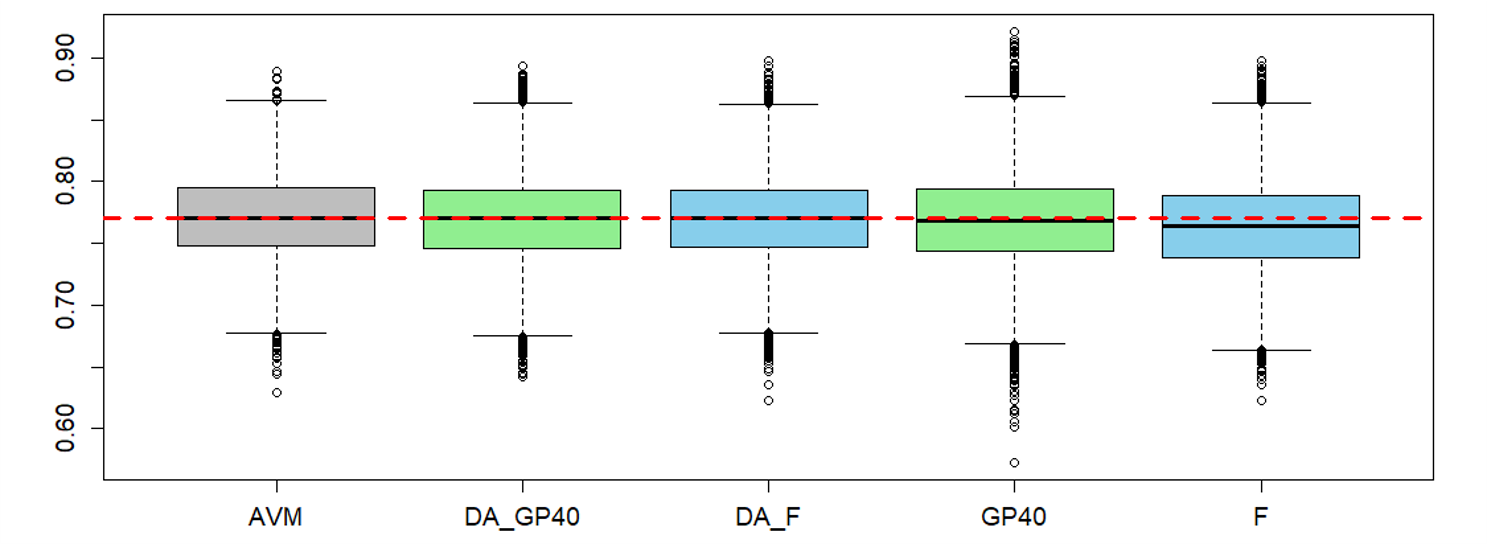} }}%
\caption{Density comparison of $\theta$ for the Potts model. The posterior densities and their corresponding surrogate densities are illustrated in the same color. The red dotted line indicates the posterior median obtained from the AVM. 
}%
\label{potts_box}%
\end{figure}

Table~\ref{potts_table} indicates that the posterior mean estimates from different methods are well aligned to the simulated truth of $\theta=0.8$. Furthermore, we observe that the number of auxiliary variable simulations was reduced by half, resulting in a significant reduction in computing time. For both $\text{DA-AVM}_{\mathrm{GP}}$ and $\text{DA-AVM}_{\mathrm{F}}$, among all rejected proposals, approximately 70 percent of them are filtered in the first stage. This implies that the surrogate models in both approaches are well-constructed and effective. Figure~\ref{potts_box} compares the posterior densities obtained from the DA-AVM methods with their corresponding surrogate densities. The surrogate density is obtained by running a Metropolis-Hastings (MH) algorithm targeting the surrogate distribution, using the same number of iterations as the DA-AVM methods. The similarity between the surrogate and posterior densities indicates that the surrogate models are well constructed. The effective sample size per time (ESS/Time) can be used to assess the efficiency of different algorithms, as it simultaneously accounts for the mixing of the Markov chain and the computational cost. In our study, we also compare ESS/Time across methods. We observe that the ESS/Time values for both DA-AVM methods are higher than those for the standard AVM.

\subsection{An Interaction Point Process Model}

Let $\mathbf x =\lbrace \mathbf x_i \rbrace$ be a realization of spatial point process in a bounded domain $\mathcal W \in \mathbb R^2$. An interaction point process model can describe spatial patterns among points from an interaction function $\phi(d_{ij})$, where $d_{ij}$ is a pairwise distance between $\mathbf x_i$ and $\mathbf x_j$. \cite{attrep} developed a point process to describe the attraction and repulsion patterns of the cells infected with the human respiratory syncytial virus (RSV). The probability model is 
\begin{equation}
\frac{
    \lambda^n \left[ \prod_{i=1}^{n} \exp \left\{ \min \left\{ \sum_{i \neq j} \log (\phi(d_{ij})), 1.2\right\} \right\} \right]
}{
    Z(\bm{\theta})
},~~~\bm \theta =(\lambda, \theta_1,\theta_2,\theta_3).
\label{e27}
\end{equation}
Here, the interaction function is defined as 
\begin{equation}
\phi(d) =
\begin{cases} 
    0, & 0 \leq d \leq R \\
    \theta_1 - \left\{ \frac{\sqrt{\theta_1}}{\theta_2 - R} (d - \theta_2) \right\}^2 & R < d \leq d_1 \\
    1 + \frac{1}{(\theta_3 (d - d_2))^2} & d > d_1.
\end{cases}
\label{e26}
\end{equation}
In the model, $\lambda$ controls the overall intensity of the process; $\theta_1$ represents the maximum value of $\phi$; $\theta_2$ corresponds to the value of $d$ at which $\phi$ attains its maximum; and $\theta_3$ is the decay rate of $\phi$. The calculation of $Z(\bm \theta)$ is intractable because it requires integration in the continuous spatial domain $\mathcal W$. In this example, we analyze the RSV-A point pattern data, consisting of approximately 3,000 points, collected from the $\mathbf{1A2A}$ experiment \citep{attrep}.  Following \cite{attrep}, we use uniform priors for $(\lambda, \theta_1,\theta_2,\theta_3)$ with a range $[2\times 10^{-4},6\times 10^{-4}]\times [1,2]\times[0,20]\times[0,1]$. For all MCMC methods, we run 40,000 iterations, and auxiliary variables are generated using 10 cycles (i.e., $10\times \text{sample size}$) of the birth-death MCMC sampler \citep{geyer1994simulation}.

We implement $\text{DA-AVM}_{\mathrm{GP}}$ using a varying number of particles (ranging from 100 to 400) to cover the 4-dimensional parameter space. Due to the absence of low-dimensional summary statistics in \eqref{e27}, we generate particles using a short run of AVM rather than using the ABC algorithm (Algorithm~\ref{DMHalg} in \ref{algo}). We use 2,000 samples to construct important sampling estimates, and each sample is generated using 10 cycles of the birth-death MCMC sampler. Note that obtaining frequentist estimators for this model is challenging due to the absence of analytical gradients. Instead, we implement $\text{DA-AVM}_{\mathrm{S}}$ using a $1/K$ subsample of the full dataset, where $K = 4, 8, 16$.

\newcolumntype{M}[1]{>{\centering\arraybackslash}m{#1}}
\newcolumntype{P}[1]{>{\centering\arraybackslash}p{#1}}
\begin{table}[htbp]
  \centering
    \begin{tabular}{cccccc}
        \toprule
         Method & $\lambda\times 10^4$ & Time (hr) & \# AV simulations & Eff& ESS/Time\\
        \midrule
         AVM & \multicolumn{1}{c}{\begin{tabular}[c]{@{}c@{}}2.96\\ (2.61, 3.29)\end{tabular}} & 4.56 & 40,000 & - & 11.63\\
         $\text{DA-AVM}_{\mathrm{GP100}}$ & \multicolumn{1}{c}{\begin{tabular}[c]{@{}c@{}}2.97\\ (2.63, 3.27)\end{tabular}} &\multicolumn{1}{c}{\begin{tabular}[c]{@{}c@{}} 3.98\end{tabular}} & \multicolumn{1}{c}{\begin{tabular}[c]{@{}c@{}} 32,112\end{tabular}} & 0.28& 12.20\\
         $\text{DA-AVM}_{\mathrm{GP200}}$ & \multicolumn{1}{c}{\begin{tabular}[c]{@{}c@{}}2.97\\ (2.66, 3.31)\end{tabular}}&\multicolumn{1}{c}{\begin{tabular}[c]{@{}c@{}} 2.28\end{tabular}} & \multicolumn{1}{c}{\begin{tabular}[c]{@{}c@{}} 17,488\end{tabular}} & 0.70&19.02\\
         $\text{DA-AVM}_{\mathrm{GP400}}$ & \multicolumn{1}{c}{\begin{tabular}[c]{@{}c@{}}2.97\\ (2.63, 3.29)\end{tabular}}&\multicolumn{1}{c}{\begin{tabular}[c]{@{}c@{}} 2.58\end{tabular}} & \multicolumn{1}{c}{\begin{tabular}[c]{@{}c@{}} 17,733\end{tabular}} & 0.73& 17.44\\
         $\text{DA-AVM}_{\mathrm{S4}}$ & \multicolumn{1}{c}{\begin{tabular}[c]{@{}c@{}}2.97\\ (2.62, 3.30)\end{tabular}} & 2.21 & 16,707 & 0.65 & 18.60\\
         $\text{DA-AVM}_{\mathrm{S8}}$ & \multicolumn{1}{c}{\begin{tabular}[c]{@{}c@{}}2.99\\ (2.65, 3.32)\end{tabular}} & 2.68 & 22,339 & 0.50 & 16.26\\
         $\text{DA-AVM}_{\mathrm{S16}}$ & \multicolumn{1}{c}{\begin{tabular}[c]{@{}c@{}}2.98\\ (2.64, 3.27)\end{tabular}} & 3.12 & 26,635 & 0.39 &15.32\\
        \bottomrule
    \end{tabular}
    \caption{The posterior mean and 95\% HPD interval of $\lambda\times 10^4$ for an interaction point process model on the RSV-A point pattern data. For the DA-AVM methods, the reported computing times include the construction of the surrogate models.  
    }
    \label{ar_table}
\end{table}

\begin{figure}[htbp!]
\centering
{{\includegraphics[width=1\linewidth]{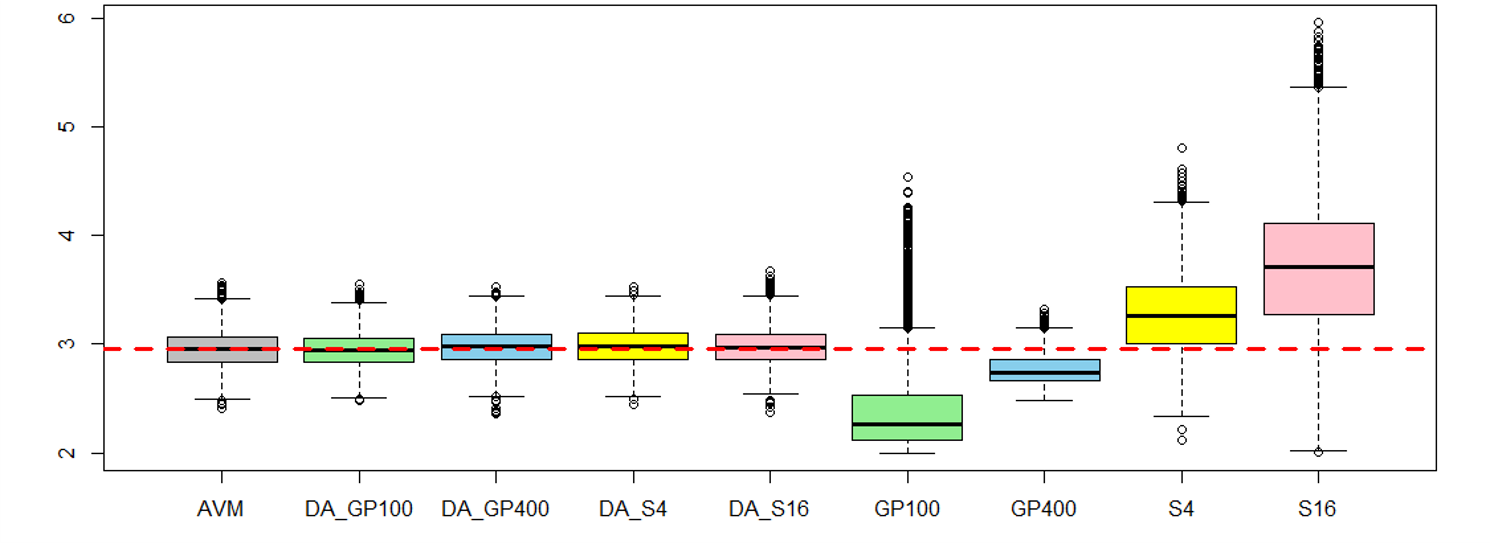} }}%
\caption{Density comparison of $\lambda \times 10^4$ for the interaction point process model. The posterior densities and their corresponding surrogate densities are illustrated in the same color. The red dotted line indicates the posterior median obtained from the AVM.}%
\label{inter_box}%
\end{figure}

Table \ref{ar_table} summarizes the inference results of $\lambda\times 10^4$, indicating that the estimates from the different methods are comparable, while the DA-AVM methods reduce the number of auxiliary variable simulations. Results for the other parameters are provided in \ref{extraresults}. We observe that the efficiency of the DA-AVM methods depends on the surrogate model construction. For $\text{DA-AVM}_\mathrm{GP}$, at least 200 particles are required to achieve 70\% efficiency. Similarly, we observe that the ESS/Time becomes noticeably higher than that of the AVM when at least 200 particles are used. In multidimensional parameter problems, the performance of the Gaussian process emulator is highly dependent on the choice of particles. Consequently, if too few particles are used, purely emulation-based approaches \citep{park2020function, vu2023warped} cannot accurately approximate the target posterior distribution. Similarly, in $\text{DA-AVM}_\mathrm{GP}$, an insufficient number of particles may cause the GP emulator in the first stage kernel to fail to filter proposed candidates efficiently. However, the correction term in the second stage kernel ensures a detailed balance, allowing $\text{DA-AVM}_\mathrm{GP}$ to achieve results comparable to those of the standard AVM. $\text{DA-AVM}_\mathrm{S}$ methods also reduce computational cost because the auxiliary variable simulation in the first stage kernel is low-dimensional compared to the original data. Compared to $\text{DA-AVM}_\mathrm{GP}$, $\text{DA-AVM}_\mathrm{S}$ requires fewer components to be tuned. We observe that $K=4$ is the most efficient in terms of both Eff and ESS/Time. If the subsample size is too small, the resulting surrogate in the first stage kernel becomes flat (i.e., noninformative); therefore, it is likely to accept most proposed candidates. However, $\text{DA-AVM}_\mathrm{S}$ can still approximate the target posterior due to the correction term in the second stage. The importance of surrogate construction is also evident in Figure~\ref{inter_box}. All surrogate densities show deviations from the AVM posterior, with the GP surrogate using 100 particles and the subset-based surrogate with $K=16$ exhibiting particularly large discrepancies, resulting in lower efficiency. However, as discussed earlier, all DA-AVM methods still produce posterior samples comparable to those from AVM.


\subsection{An Exponential Random Graph Model}

Exponential random graph models (ERGMs) \citep{robins2007introduction, hunter2008ergm} are commonly employed to represent social networks as random structures governed by nodal and dyadic interactions. Consider an observed undirected network $\mathbf{x} = \{x_{ij}\}$ with binary adjacency entries $x_{ij} \in \{0,1\}$ for $i < j$, where $x_{ij} = 1$ indicates the presence of an edge between nodes $i$ and $j$, and $x_{ij} = 0$ otherwise. We analyze the Faux Mesa high school network dataset \citep{goodreau2007advances, resnick1997protecting}, which describes a high school friendship network. Each student is associated with covariates such as grade and sex, allowing for the analysis of homophily effects. The corresponding likelihood function is
\begin{align}\label{ergmmodel}
    L(\bm{\theta} | \textbf{x}) &=\frac{\exp\left\{ \bm{\theta}^\top \mathbf{s}(\mathbf{x}) \right\}}{Z(\bm{\theta})},\nonumber\\
    S_1(\textbf{x}) &= \sum_{i=1}^N \binom{x_{i+}}{1},\nonumber \\
    S_{g-5}(\textbf{x}) &= \sum_{i<j} x_{i,j} (1\{ \text{grade}_i = g\} \times 1\{ \text{grade}_j = g\}), \: g = 7,\cdots,12, \nonumber\\
    S_8(\textbf{x}) &= e^{0.25} \sum_{k=1}^{N-1} \left\{1-(1-e^{-0.25})^k\right\} \text{D}_k(\textbf{x}), \nonumber\\
    S_9(\textbf{x}) &= e^{0.25} \sum_{k=1}^{N-2} \left\{1-(1-e^{-0.25})^k\right\} \text{ESP}_k(\textbf{x}),
\end{align}
where $\bm\theta = (\theta_1, \dots, \theta_9)$ are parameters that account for various aspects of the network structure: an edge term for network density, a homophily effect based on grade, a geometrically weighted degree term for degree heterogeneity, and a geometrically weighted edgewise shared partners term for transitivity \citep{snijders2006new}. The computation of $Z(\bm{\theta})$ is intractable, as it requires summation over all $2^{203\choose2}$ possible configurations. Independent normal priors with mean zero and variance 10 are assigned to all parameters. MCMC algorithms are run for 50,000 iterations until convergence, with the first 10,000 samples discarded as burn-in. Auxiliary variables are generated using 10 cycles (i.e., $10 \times 203 \times 203$ iterations) of the Gibbs sampler.

To cover the 9-dimensional parameter space, $\text{DA-AVM}_{\mathrm{GP}}$ is implemented with 400 and 800 particles. The particles are generated using the ABC algorithm described in \ref{algo} (Algorithm~\ref{ABCalg}). Importance sampling estimates are then constructed using 1,000 samples, each of which is generated through 10 cycles of the Gibbs sampler. We also implement $\text{DA-AVM}_{\mathrm{F}}$ based on the MCMLE and its associated observed Fisher information, which are computed using the {\tt ergm} package. We did not implement $\text{DA-AVM}_{\mathrm{S}}$ because partitioning network data while preserving the connectivity structure is not trivial.

\begin{table}[htbp]
  \centering
    \begin{tabular}{cccccc}
        \toprule
         Method & $\theta_1$ & Time (min) & \# AV simulations & Eff & ESS/Time\\
        \midrule
         AVM & \multicolumn{1}{c}{\begin{tabular}[c]{@{}c@{}}-6.35\\ (-6.82, -5.94)\end{tabular}} & 24.4 & 50,000 & - & 162.82 \\
         $\text{DA-AVM}_{\mathrm{GP400}}$ & \multicolumn{1}{c}{\begin{tabular}[c]{@{}c@{}}-6.36\\ (-6.77, -5.90)\end{tabular}} & 17.3 & 34,683 & 0.47 & 229.49\\
         $\text{DA-AVM}_{\mathrm{GP800}}$ & \multicolumn{1}{c}{\begin{tabular}[c]{@{}c@{}}-6.32\\ (-6.72, -5.89)\end{tabular}} & 18.0 & 35,010 & 0.45 & 220.59\\
         $\text{DA-AVM}_{\mathrm{F}}$ & \multicolumn{1}{c}{\begin{tabular}[c]{@{}c@{}}-6.31\\ (-6.70, -5.97)\end{tabular}} & 13.3 & 27,500 & 0.66 & 298.46\\
        \bottomrule
    \end{tabular}
    \caption{The posterior mean and 95\% HPD interval of $\theta_1$ for the ERGM on the Faux Mesa high school network data. For the DA-AVM methods, the reported computing times include the construction of the surrogate models.
    }
    \label{ergm_table}
\end{table}

\begin{figure}[htbp!]
\centering
{{\includegraphics[width=1\linewidth]{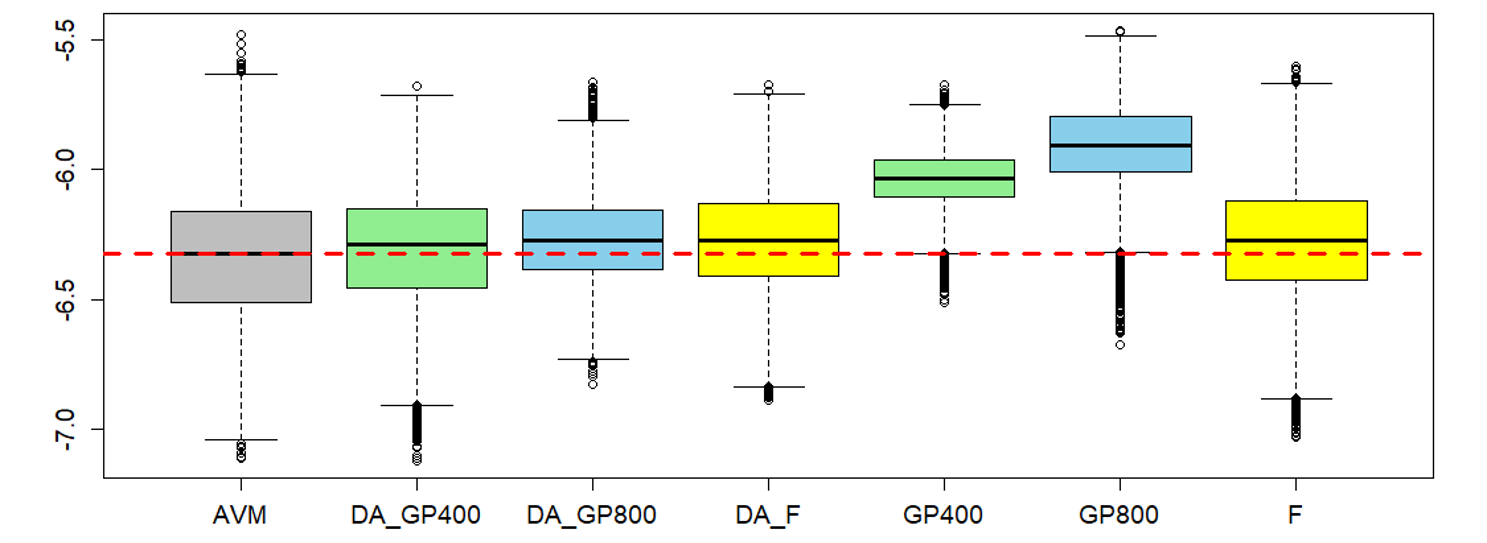} }}%
\caption{Density comparison of $\theta_1$ for the ERGM. The posterior densities and their corresponding surrogate densities are illustrated in the same color. The red dotted line indicates the posterior median obtained from the AVM.}%
\label{ergm1_box}%
\end{figure}

Table~\ref{ergm_table} presents the inference results for $\theta_1$, indicating that the results are comparable across all methods. Results for the other parameters are provided in \ref{extraresults}. As in the previous examples, the DA-AVM methods reduce the number of auxiliary variable simulations compared to the baseline AVM. We observe that $\text{DA-AVM}_{\mathrm{GP}}$ exhibits a relatively low efficiency even with an increasing number of particles. As previously discussed, constructing an accurate Gaussian process emulator is particularly challenging for multidimensional problems. On the other hand, $\text{DA-AVM}_{\mathrm{F}}$ can efficiently filter out proposed candidates because the frequentist estimator-based approximation is more accurate than the Gaussian process emulator for this multidimensional problem. As in previous examples, we compare the posterior densities with their surrogate densities (Figure~\ref{ergm1_box}). The GP surrogate shows noticeable deviation from the target posterior even with increasing particles, while the frequentist-based approximation closely matches the target. This aligns with the higher Eff reported for $\text{DA-AVM}_{\mathrm{F}}$ in Table~\ref{ar_table}. Despite its lower efficiency, $\text{DA-AVM}_{\mathrm{GP}}$ still yields comparable posteriors to AVM due to the second-stage correction. This aspect is further confirmed by the result that $\text{DA-AVM}_{\mathrm{F}}$ achieves a higher ESS/Time compared to the other methods.

\subsection{A Susceptible-Infected-Recovered Model}

Although not a spatial model, this section illustrates the broader applicability of our approach to partially observed Markov processes, which involve intractable likelihoods. Specifically, we study a discrete-time stochastic susceptible-infected-recovered (SIR) model \citep{mckinley2009inference} using historical weekly measles incidence data from Baltimore, available in the {\tt pomp} package \citep{king2016statistical}. Let $S(t), I(t), R(t)$ denote the number of susceptible, infected, and recovered individuals at time $t$, respectively. The latent state evolves according to the following transition dynamics:
\begin{align}\label{SIRlatent}
\Delta N_{SI}(t) &\sim \mathrm{Pois}\left( \frac{\beta S(t) I(t)}{N} \right), \nonumber \\
\Delta N_{IR}(t) &\sim \mathrm{Pois}(\gamma I(t)), \nonumber \\
S({t+1}) &= S(t) - \Delta N_{SI}(t), \nonumber \\
I(t+1) &= I(t) + \Delta N_{SI,t}(t) - \Delta N_{IR}(t), \nonumber \\
R(t+1) &= R(t) + \Delta N_{IR}(t),
\end{align}
where $N$ is the total population, and $\beta$ and $\gamma$ are the transmission and recovery rates. We assume a single infection in the initial state (i.e., $S(0)=N-1, I(0)=1, R(0)=0$). In \eqref{SIRlatent}, $\Delta N_{SI}(t)$ and $\Delta N_{IR}(t)$ denote the number of individuals transitioning from susceptible to infected, and from infected to recovered, respectively, at time $t$. We then introduce a measurement model, where the number of reported infectious cases follows a binomial distribution $\text{case}_t \sim \mathrm{Bin}(I(t), \rho)$, where $\rho$ is the reporting probability. The likelihood function for this model is intractable, as it involves integration over high-dimensional latent state spaces, specifically $\lbrace S(t), I(t), R(t)\rbrace_{t=0}^{T}$. We use independent priors for the model parameters: $\beta \sim \mbox{Log-Normal}(\log 2, 1), \gamma \sim \mbox{Log-Normal} (\log 1, 1), \rho \sim \mbox{Beta}(2,2).$

In this example, we use pseudo-marginal MCMC (PMCMC), where the particle filter \citep{andrieu2010particle} provides an unbiased estimate of the likelihood, which is used in the MH acceptance step. We implement this using the \texttt{pfilter} function from the \texttt{pomp} package. Although the stationary distribution of the PMCMC algorithm is identical to the target posterior distribution, obtaining an unbiased likelihood estimate at each iteration is computationally expensive. To address this, $\text{DA-MCMC}_{\mathrm{F}}$ is implemented using the MLE and observed Fisher information computed via the iterated filtering algorithm \citep{ionides2015inference}, using the \texttt{mif2} function in the \texttt{pomp} package. Once a candidate is accepted in the first stage, an expensive unbiased estimate of the likelihood is computed in the second stage. We use 1,000 particles for both PMCMC and $\text{DA-MCMC}_{\mathrm{F}}$. Each MCMC algorithm is run for 20,000 iterations, with the first 10,000 samples discarded as burn-in.

\begin{table}[htbp]
  \centering
    \begin{tabular}{cccccc}
        \toprule
         Method & $\beta$ & Time (min) & \# AV simulations & Eff & ESS/Time\\
        \midrule
         PMCMC & \multicolumn{1}{c}{\begin{tabular}[c]{@{}c@{}}2.00\\ (1.99, 2.01)\end{tabular}} & 31.32 & 20,000 & - & 31.75 \\
         $\text{DA-MCMC}_{\mathrm{F}}$ & \multicolumn{1}{c}{\begin{tabular}[c]{@{}c@{}}2.00\\ (1.99, 2.01)\end{tabular}} & 8.87 & 7979 & 0.62 & 67.99\\
        \bottomrule
    \end{tabular}
    \caption{The posterior mean and 95\% HPD interval of $\beta$ for the measles data. For the DA-MCMC method, the reported computing times include the construction of the surrogate models.
    }
    \label{pomp_table}
\end{table}

Table~\ref{pomp_table} presents the inference results for $\beta$ in the SIR model. Results for the other parameters are provided in \ref{extraresults}. We observe that the posterior mean estimates from different algorithms are consistent, while $\text{DA-MCMC}_{\mathrm{F}}$ is more efficient in terms of ESS/Time. This improvement arises from the reduced number of expensive unbiased likelihood evaluations. Furthermore, the efficiency score of approximately 0.62 indicates that the surrogate model was well constructed.

\subsection{Guidelines and Recommendations}

We construct the first stage kernel using subsampling, Gaussian process emulation, and frequentist estimators to rule out implausible regions of the parameter space. We observe that variants of DA-AVMs are efficient and accurately approximate the target posterior due to the correction term in the second-stage kernel. Each method has its own advantages and disadvantages, depending on the application. Table~\ref{tab:gp_f_s_tuning} summarizes the DA-AVM methods in terms of the components that require tuning, provides recommendations for tuning these components, and indicates the types of applications for which each method is well suited.

\begin{table}[htbp]
  \centering
    \begin{tabular}{cccc}
        \toprule
         Method & Tune & Recommended & Application \\
        \midrule
         $\text{DA-AVM}_{\mathrm{GP}}$ & $d$, $N$ & $d \approx 100p$, $N = 1,000$--$2,000$ & $p \leq 4$ \\
         $\text{DA-AVM}_{\mathrm{F}}$ & FE & MLE or MPLE & Lattice models, ERGMs \\
         $\text{DA-AVM}_{\mathrm{S}}$ & $K$ & samples per subset $\geq 500$ & No summary stat or FE \\
        \bottomrule
    \end{tabular}
    \caption{Comparison among DA-AVM methods. Here, $d$ denotes the number of particles, $N$ the number of importance samples, FE indicates the frequentist estimators, and $K$ represents the number of subsets.}
    \label{tab:gp_f_s_tuning}
\end{table}

$\text{DA-AVM}_{\mathrm{GP}}$ achieves a high efficiency for low-dimensional parameter problems, although constructing an accurate emulator becomes challenging for multidimensional cases. For $\text{DA-AVM}_{\mathrm{GP}}$, we need to tune the number of particles ($d$) and the number of importance samples ($N$). As the parameter dimension $p$ increases, both $d$ and $N$ should increase accordingly. \cite{park2020function} recommend using $d \approx 100p$ and $N$ between 1,000 and 2,000. In our study, we observed that the surrogate model is well constructed under these settings. $\text{DA-AVM}_{\mathrm{F}}$ performs well when a frequentist estimator is available, such as in the case of network models or spatial lattice models. For these models, we can simply use existing packages (e.g., {\tt ergm} or {\tt potts}) to obtain frequentist estimators. Lastly, $\text{DA-AVM}_{\mathrm{S}}$ can be easily applied without extensive tuning when summary statistics are unavailable or when deriving the analytical gradient of the likelihood is difficult, as in the interaction point process models. For $\text{DA-AVM}_{\mathrm{S}}$, we need to decide the number of subsets ($K$). As mentioned earlier, if the subsample size is too small, the resulting surrogate in the first-stage kernel becomes noninformative. Therefore, we recommend that each subset contain at least 500 samples.

\section{Discussion}
\label{sec:discuss}

In this manuscript, we propose efficient DA-AVM methods that reduce the number of auxiliary variable simulations. We demonstrate that the proposed methods satisfy detailed balance and are ergodic; therefore, DA-AVM algorithms produce samples that converge to the approximate posterior obtained from the AVM. We investigate the application of DA-AVM to a variety of intractable spatial models and show that DA-AVM is computationally more efficient than the standard AVM while providing comparable inference results.

Doubly intractable problems often arise in spatial models where the dimension of the data space $\mathcal{X}$ is much larger than that of the parameter space $\bm{\Theta}$, and this manuscript focuses on studying intractable spatial models. For example, in the ERGM considered in our study, the dimension of $\mathcal{X}$ is $2^{\binom{203}{2}}$, while the dimension of $\bm{\Theta}$ is 9. In this case, the computation of $Z(\bm{\theta})$ requires considering all possible configurations of $\mathcal{X}$, which leads to intractability. The ideas presented here can be broadly applicable to other problems involving intractable likelihoods. One example is models for max-stable processes used in spatial extremes \citep{schlather2002models, kabluchko2009stationary}, where the likelihood function is intractable due to the involvement of high-dimensional derivatives. Another example is state-space models \citep{10.5555/2534502, king2016statistical}, where the latent space is high-dimensional, making the likelihood intractable—even in non-spatial contexts.

From somewhat different perspectives, a variety of computational methods have been developed to address intractable likelihoods. \cite{matsubara2022robust} introduced a robust generalized Bayesian approach that replaces the likelihood with Stein discrepancy-based loss functions, providing theoretical guarantees while avoiding direct likelihood evaluation. More recently, neural network-based approaches have emerged as powerful alternatives. \cite{lenzi2023neural} demonstrated that deep neural networks can be trained on simulated data to directly learn the inverse map from data to parameters for max-stable processes. \cite{walchessen2024neural} constructed neural surrogate models that approximate the likelihood surface itself, enabling efficient emulation of computationally expensive likelihoods in spatial extremes. Building on this, \cite{sainsbury2025neural} extended the framework to irregular spatial domains using graph neural networks. These strategies suggest promising future directions beyond the AVM methods.

Improving the accuracy of surrogate model construction can further enhance the efficiency of the algorithm. For example, \cite{zhou2021markov} employed deep neural networks to approximate computationally expensive forward models within an MCMC framework for inverse problems. In addition, dimension reduction techniques can be considered; for instance, \cite{constantine2014computing, constantine2016accelerating} identify low-dimensional structures in the parameter space to accelerate MCMC sampling in high-dimensional Bayesian inverse problems. A detailed exploration of these methods could further improve the efficiency of DA-AVM, which is an interesting direction for future research.

\section*{Acknowledgement}
This research was supported by the Yonsei University Humanities and Social Sciences Field Creative Research Fund of 2024-22-0581. The authors are grateful to anonymous reviewers for their careful reading and valuable comments.

\clearpage
\appendix

\section{Derivation of the Marginal Transition Kernel of AVM}
\label{deriv_marginal}
For the augmented posterior, $(\bm{\theta}^*, \mathbf{y})$ are updated via a block Gibbs step: first draw $\widetilde{\bm{\theta}^*} \sim q(\widetilde{\bm{\theta}^*}|\bm{\theta})$, then $\widetilde{\mathbf{y}} \sim \frac{h(\widetilde{\mathbf{y}}|\widetilde{\bm{\theta}^*})}{Z(\widetilde{\bm{\theta}^*})}$. Here, $\widetilde{\bm{\theta}^*}$ and $\widetilde{\mathbf{y}}$ denote the updated values of $\bm{\theta}^*$ and $\mathbf{y}$. Then  $\bm{\theta}$ is updated via a Metropolis-Hastings step that exchanges parameter settings. The corresponding Markov transition kernel on the augmented space is
    \begin{align}
K_{\mathrm{AVM}}((\bm{\theta}, \bm{\theta}^*, \mathbf{y}), \mathbf{B}) 
&= \int_{\bm{\Theta}} \int_\mathbf{\mathcal{X}} \underbrace{q(\widetilde{\bm{\theta}^*} | \bm\theta)\, \frac{h(\widetilde{\mathbf{y}}|\widetilde{\bm{\theta}^*})} {Z(\widetilde{\bm{\theta}^*})}}_{\text{block Gibbs sampler}} \Big[\underbrace{  \alpha_{\mathrm{AVM}}(\bm{\theta} \rightarrow \widetilde{\bm{\theta}^*} ; \widetilde{\mathbf{y}})\mathbf{1}_\mathbf{B}( \widetilde{\bm{\theta}^*}, \bm{\theta}, \widetilde{\mathbf{y}})}_{\text{MH - accept}} \nonumber\\
&+ \underbrace{\Big(1-\alpha_{\mathrm{AVM}}(\bm{\theta} \rightarrow \widetilde{\bm{\theta}^*} ; \widetilde{\mathbf{y}})\Big)\mathbf{1}_\mathbf{B}(\bm \theta, \widetilde{\bm{\theta}^*}, \widetilde{\mathbf{y}})}_{\text{MH - reject}}\Big]d\widetilde{\mathbf{y}} \, d\widetilde{\bm{\theta}^*},
\label{kernel_aug_avm}
\end{align}
where $\mathbf{B}\subseteq\mathcal{B}(\bm{\Theta} \times \bm{\Theta} \times \mathcal{X})$ is a measurable set in the Borel $\sigma$-algebra on the product space $\bm{\Theta} \times \bm{\Theta} \times \mathcal{X}$, and the acceptance probability is $$\alpha_{\mathrm{AVM}}(\bm{\theta} \rightarrow \widetilde{\bm{\theta}^*} ; \widetilde{\mathbf{y}})=\min \left\{ 1, 
\frac{p(\widetilde{\bm{\theta}^*}) h(\mathbf{x} | \widetilde{\bm{\theta}^*}) h(\widetilde{\mathbf{y}} | \bm{\theta})  q(\bm{\theta} | \widetilde{\bm{\theta}^*})}
{p(\bm{\theta}) h(\mathbf{x} | \bm{\theta})  h(\widetilde{\mathbf{y}} | \widetilde{\bm{\theta}^*})  q(\widetilde{\bm{\theta}^*} | \bm{\theta})}
\right\}.$$ 
From \eqref{kernel_aug_avm}, the marginal transition kernel targeting $\pi(\bm{\theta}|\mathbf{x})$ is obtained by projection onto the marginal space. Define $\pi_{\bm{\Theta}}(\bm{\theta}, \bm{\theta}^*, \mathbf{y})=\bm{\theta}.$ For $\mathbf{A}\subseteq \mathcal{B}(\bm{\Theta})$, let $\mathbf{B}=\pi_{\bm{\Theta}}^{-1}(\mathbf{A})=\mathbf{A}\times\bm{\Theta}\times\mathcal{X}$. Then the indicators reduce to  $\mathbf{1}_\mathbf{B}( \widetilde{\bm{\theta}^*}, \bm{\theta}, \widetilde{\mathbf{y}})=\mathbf{1}_{\mathbf{A}}(\widetilde{\bm{\theta}^*})$ and $\mathbf{1}_\mathbf{B}(\bm{\theta}, \widetilde{\bm{\theta}^*}, \widetilde{\mathbf{y}})=\mathbf{1}_{\mathbf{A}}(\bm{\theta})$. The induced marginal kernel on $\bm{\Theta}$ is therefore
\begin{align}
    K_{\mathrm{AVM}}(\bm{\theta}, \mathbf{A})&=\int_{\bm{\Theta}} \int_{\mathcal{X}} q(\widetilde{\bm{\theta}^*} | \bm\theta)\, \frac{h(\widetilde{\mathbf{y}}|\widetilde{\bm{\theta}^*})} {Z(\widetilde{\bm{\theta}^*})} \alpha_{\mathrm{AVM}}(\bm{\theta} \rightarrow \widetilde{\bm{\theta}^*} ; \widetilde{\mathbf{y}})\mathbf{1}_{\mathbf{A}}( \widetilde{\bm{\theta}^*})d\widetilde{\mathbf{y}} \, d\widetilde{\bm{\theta}^*} \nonumber \\
    &+ \Big(1-\int_{\bm{\Theta}} \int_{\mathcal{X}} q(\widetilde{\bm{\theta}^*} | \bm\theta)\, \frac{h(\widetilde{\mathbf{y}}|\widetilde{\bm{\theta}^*})} {Z(\widetilde{\bm{\theta}^*})}\alpha_{\mathrm{AVM}}(\bm{\theta} \rightarrow \widetilde{\bm{\theta}^*} ; \widetilde{\mathbf{y}})d\widetilde{\mathbf{y}} \, d\widetilde{\bm{\theta}^*}\Big)\mathbf{1}_{\mathbf{A}}(\bm \theta).
\end{align}
Define the $\mathbf{y}$-averaged acceptance probability $$\overline{\alpha}_{\mathrm{AVM}}(\bm{\theta} \rightarrow \bm{\theta}^*)=\int_{\mathcal{X}} \alpha_{\mathrm{AVM}}(\bm{\theta} \rightarrow \bm{\theta}^* ; \mathbf{y}) \frac{h(\mathbf{y}|\bm{\theta}^*)}{Z(\bm{\theta}^*)} d\mathbf{y},$$ and the overall acceptance probability
$$r_{\mathrm{AVM}}(\bm{\theta})=\int_{\bm{\Theta}} q(\bm{\theta}^*|\bm{\theta})  \overline{\alpha}_{\mathrm{AVM}}(\bm{\theta} \rightarrow \bm{\theta}^* ) d\bm{\theta}^*.$$
Then the marginal transition kernel simplifies to 
\begin{equation}
K_{\mathrm{AVM}}(\bm{\theta}, \mathbf{A})=\int_{\mathbf{A}} q(\bm{\theta}^*|\bm{\theta}) \overline{\alpha}_{\mathrm{AVM}}(\bm{\theta} \rightarrow \bm{\theta}^*) d\bm{\theta}^* + (1-r_{\mathrm{AVM}}(\bm{\theta})) \mathbf{1}_{\mathbf{A}}(\bm{\theta}).    
\label{kernel_marginal_avm}
\end{equation}
Note that \eqref{kernel_marginal_avm} coincides with the expression given in the proof of Corollary 2.3 in the appendix of \cite{alquier2014noisy}.
\clearpage

\section{Algorithm Details} \label{algo}
\begin{algorithm}[hbt]
\caption{$\text{DA-AVM}_{\mathrm{S}}$ algorithm}\label{daavsp}
\small \textbf{Input}: $N$ (the number of MCMC iterations), $m_1$ and $m_2$ (the lengths of the inner sampler in the first and second stages, respectively, where typically $m_1 \ll m_2$)

\small \textbf{Output}: Posterior samples $(\bm\theta^{(1)},\ldots,\bm\theta^{(N)})$ 

\begin{algorithmic}[1]
    \For {$n=0,1,\ldots,N-1$}
    \State \small Sample $\mathbf{x}_{\mathrm{sub}} \in \mathcal{X}_{\mathrm{sub}} \subset \mathcal{X}$
    \State \small $\bm\theta^*\sim q(\cdot|\bm \theta^{(n)})$ with step size $\sigma^{(n)}$
    \State \small $\mathbf y_{\mathrm{sub}}\sim h(\cdot|\bm{\theta}^*)/Z(\bm{\theta}^*)$ by using an $m_1$ iterations of the inner sampler
    \State \small $\alpha_{\mathrm{S1}} \gets \min\left\{1, \frac{p(\bm\theta^*)h(\mathbf x_{\mathrm{sub}}| \bm \theta^*)h(\mathbf y_{\mathrm{sub}}| \bm \theta^{(n)}) q(\bm\theta^{(n)} | \bm\theta^*)}{p(\bm\theta^{(n)})h(\mathbf x_{\mathrm{sub}}|\bm \theta^{(n)})h(\mathbf y_{\mathrm{sub}}| \bm \theta^*) q(\bm\theta^* | \bm\theta^{(n)})} \right\}$

    \State \small $u\sim\text{Unif}[0,1]$
    \If{$u<\alpha_{\mathrm{S1}}$}
        \State $\mathbf y\sim h(\cdot|\bm{\theta}^*)/Z(\bm{\theta}^*)$ by using an $m_2$ iterations of the inner sampler
        \State $\alpha_{\mathrm{S2}} \gets \min \left\{ 1, \frac{ h(\mathbf{x}| \bm{\theta}^*) h(\mathbf y | \bm \theta^{(n)})h(\mathbf x_{\mathrm{sub}}|\bm \theta^{(n)})h(\mathbf y_{\mathrm{sub}}| \bm \theta^*)}{ h(\mathbf x| \bm \theta^{(n)})h(\mathbf y | \bm \theta^*)h(\mathbf x_{\mathrm{sub}}| \bm \theta^*)h(\mathbf y_{\mathrm{sub}}| \bm \theta^{(n)}) }\right\}$
        \State $u\sim\text{Unif}[0,1]$
        \If{$u<\alpha_{\mathrm{S2}}$}
            \State $\bm\theta^{(n+1)} \gets \bm\theta^{*}$
        \Else
            \State $\bm\theta^{(n+1)} \gets \bm\theta^{(n)}$
        \EndIf
    \Else
        \State $\bm\theta^{(n+1)} \gets \bm\theta^{(n)}$
    \EndIf
    \EndFor 
\end{algorithmic}
\label{DASUBAL}
\end{algorithm}
\clearpage

\begin{algorithm}[hbt]
\textbf{Part 1: Construct the Gaussian process emulator}
\caption{$\text{DA-AVM}_{\mathrm{GP}}$ algorithm}\label{daavgp}
\begin{algorithmic}
\normalsize

\State {\it{Step 1.}} Generate $\lbrace \mathbf{x}_l \rbrace_{l=1}^{N}$ from a Markov chain whose stationary distribution is $h(\cdot | \widetilde{\bm\theta}) / Z(\widetilde{\bm\theta})$. 

\State {\it{Step 2.}} Compute an importance sampling estimate \eqref{importanceest} at each particle as $\log \widehat{Z}_{\mathrm{IS}}(\bm\theta^{(i)})$ for $i=1,\cdots,d$. 

\State {\it{Step 3.}} Fitting the Gaussian process model to $\lbrace \bm\theta^{(i)}, \log \widehat{Z}_{\mathrm{IS}}(\bm\theta^{(i)})\rbrace_{i=1}^{d}$ via a maximum likelihood approach.\\
\end{algorithmic}

\textbf{Part2. DA-AVM algorithm with the Gaussian process emulator} \\
\small \textbf{Input}: $N$ (the number of MCMC iterations), $m$ (the length of the inner sampler),  $\widehat \pi_{\mathrm{GP}}(\cdot|\mathbf x)$ (the surrogate posterior with Gaussian process emulation)

\small \textbf{Output}: Posterior samples $(\bm\theta^{(1)},\ldots,\bm\theta^{(N)})$ 

\begin{algorithmic}[1]
    \For {$n=0,1,\ldots,N-1$}
    \State \small $\bm\theta^*\sim q(\cdot|\bm \theta^{(n)})$ 

    \State \small $\alpha_{\mathrm{GP1}} \gets \min \left\{ 1, 
\frac{\widehat{\pi}_{\mathrm{GP}}(\bm{\theta}^*|\mathbf x) q(\bm{\theta}^{(n)}| \bm{\theta}^*)}{\widehat{\pi}_{\mathrm{GP}}(\bm{\theta}^{(n)}|\mathbf x) q(\bm{\theta}^* | \bm{\theta}^{(n)})}
\right\}$

    \State \small $u\sim\text{Unif}[0,1]$
    \If{$u<\alpha_{\mathrm{GP1}}$}
        \State $\mathbf y\sim h(\cdot | \bm\theta^*) / Z(\bm\theta^*)$ using an $m$ iterations of the inner sampler
        \State $\alpha_{\mathrm{GP2}} \gets \min \left\{ 1, 
    \frac{p(\bm{\theta}^*) h(\mathbf x | \bm{\theta}^*)h(\mathbf{y} | \bm{\theta}^{(n)}) \widehat{\pi}_{\mathrm{GP}}(\bm\theta^{(n)}|\mathbf x)}
    {p(\bm{\theta}^{(n)}) h(\mathbf{x} | \bm{\theta}^{(n)}) h(\mathbf y | \bm{\theta}^*) \widehat{\pi}_{\mathrm{GP}}(\bm\theta^{*}|\mathbf x)}
    \right\}$
        \State $u\sim\text{Unif}[0,1]$
        \If{$u<\alpha_{\mathrm{GP2}}$}
            \State $\bm\theta^{(n+1)} \gets \bm\theta^{*}$
        \Else
            \State $\bm\theta^{(n+1)} \gets \bm\theta^{(n)}$
        \EndIf
    \Else
        \State $\bm\theta^{(n+1)} \gets \bm\theta^{(n)}$
    \EndIf
    \EndFor 
\end{algorithmic}
\end{algorithm}
\clearpage

\begin{algorithm}[hbt]
\caption{$\text{DA-AVM}_{\mathrm{F}}$ algorithm}\label{daavn}

\small \textbf{Input}: $N$ (the number of MCMC iterations), $m$ (the length of the inner sampler),  $\widehat \pi_{\mathrm{F}}(\cdot|\mathbf x)$ (the surrogate posterior based on the frequentist estimator)

\small \textbf{Output}: Posterior samples $(\bm\theta^{(1)},\ldots,\bm\theta^{(N)})$ 

\begin{algorithmic}[1]
    \For {$n=0,1,\ldots,N-1$}

    \State \small $\bm\theta^*\sim q(\cdot|\bm \theta^{(n)})$

    \State \small $\alpha_{\mathrm{F1}} \gets \min\left\{1, \frac{\widehat{\pi}_{\mathrm{F}}(\bm\theta^*|\mathbf x) q(\bm\theta^{(n)} | \bm\theta^*)}{\widehat{\pi}_{\mathrm{F}}(\bm\theta^{(n)}|\mathbf x) q(\bm\theta^* | \bm\theta^{(n)})} \right\}$

    \State \small $u\sim\text{Unif}[0,1]$
    \If{$u<\alpha_{\mathrm{F1}}$}
        \State $\mathbf y\sim h(\cdot | \bm\theta^*) / Z(\bm\theta^*)$ using an $m$ iterations of the inner sampler
        \State $\alpha_{\mathrm{F2}} \gets \min \left\{ 1, 
    \frac{p(\bm{\theta}^*) h(\mathbf x | \bm{\theta}^*)h(\mathbf{y} | \bm{\theta}^{(n)}) \widehat{\pi}_{\mathrm{F}}(\bm\theta^{(n)}|\mathbf x)}
    {p(\bm{\theta}^{(n)}) h(\mathbf{x} | \bm{\theta}^{(n)}) h(\mathbf y | \bm{\theta}^*) \widehat{\pi}_{\mathrm{F}}(\bm\theta^{*}|\mathbf x)}
    \right\}$
    
        \State $u\sim\text{Unif}[0,1]$
        \If{$u<\alpha_{\mathrm{F2}}$}
            \State $\bm\theta^{(n+1)} \gets \bm\theta^{*}$
        \Else
            \State $\bm\theta^{(n+1)} \gets \bm\theta^{(n)}$
        \EndIf
    \Else
        \State $\bm\theta^{(n+1)} \gets \bm\theta^{(n)}$
    \EndIf
    \EndFor 
\end{algorithmic}
\end{algorithm}

\clearpage

\begin{algorithm}
\caption{AVM algorithm}\label{DMHalg}
\small \textbf{Input}: $N$ (the number of MCMC iterations), $m$ (the length of inner sampler)  \\
\small \textbf{Output}: Posterior samples $(\bm\theta^{(1)},\ldots,\bm\theta^{(N)})$ 

\begin{algorithmic}[1]
    \For {$n=0,1,\ldots,N-1$}

    \State \small $\bm{\theta}^* \sim q(\cdot|\bm{\theta}^{(n)})$

    \State \small $\mathbf{y} \sim h(\cdot|\bm{\theta}^*)/Z(\bm{\theta}^*)$ using an $m$ iterations of the inner sampler

    \State \small $\alpha \gets \min\left\lbrace 1, \frac{ p(\bm{\theta}^*)h(\mathbf{x}|\bm{\theta}^*)h(\mathbf{y}|\bm{\theta}^{(n)})q(\bm{\theta}^{(n)}|\bm{\theta}^*)}{p(\bm{\theta}^{(n)})h(\mathbf{x}|\bm{\theta}^{(n)})h(\mathbf{y}|\bm{\theta}^*)q(\bm{\theta}^*|\bm{\theta}^{(n)})} \right\rbrace$

    \State \small $u\sim\text{Unif}[0,1]$
    \If{$u<\alpha$}
        \State $\bm\theta^{(n+1)} \gets \bm{\theta}^*$
        \Else
            \State $\bm\theta^{(n+1)} \gets \bm\theta^{(n)}$
    \EndIf
    \EndFor 
\end{algorithmic}
\end{algorithm}

\begin{algorithm}
\caption{ABC algorithm}\label{ABCalg}
\small \textbf{Input}: $(\widehat{\bm\theta},\widehat{\bm\sigma})$ (frequentist estimator and its standard error), $S(\cdot)$ (summary statistics of the model),  $D$ (the number of design points), $\epsilon$ (criterion), $d$ (the number of particles)  \\
\small \textbf{Output}: $d$ number of particles  $(\bm\theta^{(1)},\ldots,\bm\theta^{(d)})$ 
\begin{algorithmic}[1]
    \State $\mathcal{D}_{1}\gets[\widehat{\bm{\theta}} - 10 \widehat{\bm{\sigma}}, \widehat{\bm{\theta}} + 10 \widehat{\bm{\sigma}}]$

    \State $\mathcal{I}=\{\}$
    
    \For {$n=1,\ldots,D$}
    
    $\bm v^{(n)}\sim \text{Unif}[\mathcal{D}_{1}]$ using Latin hypercube design
    
    $\mathbf{y}^{(n)} \sim h(\cdot|\bm v^{(n)})/Z(\bm v^{(n)})$

    \If{$\|S(\mathbf{y}^{(n)}) - S(\mathbf{x})\| < \epsilon$}
        \State Add index $n$ to $\mathcal{I}$
    \EndIf
    \EndFor

    \State $\mathcal{D}_2 \gets [\min_{j\in \mathcal I}\{\bm v^{(j)}\}, \ \max_{j\in \mathcal I}\{\bm v^{(j)}\}]$ where $\mathcal{D}_2 \subset \mathcal{D}_1$

    \For {$n=1,\ldots,d$}

    $\bm\theta^{(n)} \sim \text{Unif}[\mathcal{D}_2]$ using Latin hypercube design

    \EndFor
\end{algorithmic}
\end{algorithm}
\clearpage

\section{Extra Results}\label{extraresults}

\subsection{An Interaction Point Process Model}

\begin{table}[htbp]
  \centering
    \begin{tabular}{M{30mm}M{26mm}M{26mm}M{26mm}} \toprule
    \text{AVM} &\multicolumn{1}{M{26mm}}{${\theta}_1$} &\multicolumn{1}{M{26mm}}{${\theta}_2$} &\multicolumn{1}{M{26mm}}{${\theta}_3$}\\ \midrule
        Posterior mean  & 1.34 & 11.50   &  0.22  \\ 
        95$\%$ HPD  &(1.29, 1.39) & (10.65, 12.27)   &(0.17, 0.27)   \\ \toprule
    \text{$\text{DA-AVM}_{\mathrm{GP100}}$}&\multicolumn{1}{M{26mm}}{${\theta}_1$} &\multicolumn{1}{M{26mm}}{${\theta}_2$} &\multicolumn{1}{M{26mm}}{${\theta}_3$}\\ \midrule
        Posterior mean  &1.34 &11.48 & 0.22  \\ 
        95$\%$ HPD  &(1.30, 1.39) &(10.65, 12.19) &(0.17, 0.27) \\ \toprule
    \text{$\text{DA-AVM}_{\mathrm{GP200}}$} &\multicolumn{1}{M{26mm}}{${\theta}_1$} &\multicolumn{1}{M{26mm}}{${\theta}_2$} &\multicolumn{1}{M{26mm}}{${\theta}_3$}\\ \midrule
        Posterior mean  &1.34 &11.48 & 0.22\\ 
        95$\%$ HPD & (1.30, 1.39)&(10.70, 12.28) &(0.17, 0.27)  \\ \toprule
    \text{$\text{DA-AVM}_{\mathrm{GP400}}$}&\multicolumn{1}{M{26mm}}{${\theta}_1$} &\multicolumn{1}{M{26mm}}{${\theta}_2$} &\multicolumn{1}{M{26mm}}{${\theta}_3$}\\ \midrule
        Posterior mean & 1.34 &11.48 & 0.22\\ 
        95$\%$ HPD  &(1.29, 1.39) & (10.71, 12.30)&(0.17, 0.28) \\ \toprule
    \text{$\text{DA-AVM}_{\mathrm{S4}}$} &\multicolumn{1}{M{26mm}}{${\theta}_1$} &\multicolumn{1}{M{26mm}}{${\theta}_2$} &\multicolumn{1}{M{26mm}}{${\theta}_3$}\\ \midrule
        Posterior mean & 1.34 &11.50 &0.22 \\ 
        95$\%$ HPD &(1.30, 1.39) & (10.60, 12.31) &(0.17, 0.28) \\ \toprule  
    \text{$\text{DA-AVM}_{\mathrm{S8}}$} &\multicolumn{1}{M{26mm}}{${\theta}_1$} &\multicolumn{1}{M{26mm}}{${\theta}_2$} &\multicolumn{1}{M{26mm}}{${\theta}_3$}\\ \midrule
        Posterior mean & 1.34 &11.49 & 0.22\\ 
        95$\%$ HPD  &(1.30, 1.40) &(10.65, 12.31) &(0.18, 0.29) \\ \toprule    
    \text{$\text{DA-AVM}_{\mathrm{S16}}$} &\multicolumn{1}{M{26mm}}{${\theta}_1$} &\multicolumn{1}{M{26mm}}{${\theta}_2$} &\multicolumn{1}{M{26mm}}{${\theta}_3$}\\ \midrule
        Posterior mean & 1.34 & 11.52& 0.22\\ 
        95$\%$ HPD  &(1.29, 1.39) & (10.75, 12.37) & (0.17, 0.27) \\
\bottomrule
    \end{tabular}
    \caption{Posterior inference results for the interaction point process model. The computing times, number of auxiliary variable simulations, and efficiencies are identical to those reported in the main manuscript.
    }
    \label{tab:inter_all}
\end{table}

\subsection{An Exponential Random Graph Model}
\begin{table}[hbt]
\small
  \centering
    \begin{tabular}{M{30mm}M{24mm}M{24mm}M{24mm}M{24mm}} \toprule
    \text{AVM} &\multicolumn{1}{M{24mm}}{${\theta}_2$} &\multicolumn{1}{M{24mm}}{${\theta}_3$} &\multicolumn{1}{M{24mm}}{${\theta}_4$}
    &\multicolumn{1}{M{24mm}}{${\theta}_5$}\\ \midrule
        Posterior mean  & 1.89   &2.08  &1.90 &2.05\\ 
        95$\%$ HPD  & (1.56, 2.18) & (1.75, 2.42) & (1.52, 2.28) & (1.52, 2.59)\\
\cmidrule(l){2-5}
\textbf{} &\multicolumn{1}{M{24mm}}{${\theta}_6$} &\multicolumn{1}{M{24mm}}{${\theta}_7$} &\multicolumn{1}{M{24mm}}{${\theta}_8$} &\multicolumn{1}{M{24mm}}{${\theta}_9$} \\ \cmidrule(l){2-5}
        Posterior mean  & 2.35&2.76 &0.04&1.54  \\ 
        95$\%$ HPD  & (1.98, 2.76)&(2.15, 3.40) &(-0.43, 0.46)&(1.24, 1.81) \\ \toprule
    \text{$\text{DA-AVM}_{\mathrm{GP400}}$} &\multicolumn{1}{M{24mm}}{${\theta}_2$} &\multicolumn{1}{M{24mm}}{${\theta}_3$} &\multicolumn{1}{M{24mm}}{${\theta}_4$}
    &\multicolumn{1}{M{24mm}}{${\theta}_5$}\\ \midrule
        Posterior mean  & 1.86 &2.04 &1.84 &2.04\\ 
        95$\%$ HPD  & (1.59, 2.17) &(1.72, 2.37) &(1.48, 2.18) &(1.48, 2.53)\\
\cmidrule(l){2-5}
\textbf{} &\multicolumn{1}{M{24mm}}{${\theta}_6$} &\multicolumn{1}{M{24mm}}{${\theta}_7$} &\multicolumn{1}{M{24mm}}{${\theta}_8$} &\multicolumn{1}{M{24mm}}{${\theta}_9$} \\ \cmidrule(l){2-5}
        Posterior mean  & 2.34  &2.69 &0.08&1.56 \\ 
        95$\%$ HPD  & (1.95, 2.70) &(2.01, 3.25) & (-0.35, 0.54) & (1.31, 1.84)  \\ \toprule
    \text{$\text{DA-AVM}_{\mathrm{GP800}}$} &\multicolumn{1}{M{24mm}}{${\theta}_2$} &\multicolumn{1}{M{24mm}}{${\theta}_3$} &\multicolumn{1}{M{24mm}}{${\theta}_4$}
    &\multicolumn{1}{M{24mm}}{${\theta}_5$}\\ \midrule
        Posterior mean  &  1.84 & 2.03  & 1.84 &1.99\\ 
        95$\%$ HPD  & (1.52, 2.21) & (1.65, 2.42) & (1.44, 2.29) & (1.39, 2.62) \\
\cmidrule(l){2-5}
\textbf{} &\multicolumn{1}{M{24mm}}{${\theta}_6$} &\multicolumn{1}{M{24mm}}{${\theta}_7$} &\multicolumn{1}{M{24mm}}{${\theta}_8$} &\multicolumn{1}{M{24mm}}{${\theta}_9$} \\ \cmidrule(l){2-5}
        Posterior mean  & 2.32  & 2.58& 0.06 & 1.54 \\ 
        95$\%$ HPD  & (1.90, 2.71) & (1.82, 3.25) & (-0.33, 0.45) & (1.26, 1.84) \\ \toprule
    \text{$\text{DA-AVM}_{\mathrm{F}}$} &\multicolumn{1}{M{24mm}}{${\theta}_2$} &\multicolumn{1}{M{24mm}}{${\theta}_3$} &\multicolumn{1}{M{24mm}}{${\theta}_4$}
    &\multicolumn{1}{M{24mm}}{${\theta}_5$}\\ \midrule
        Posterior mean  &  1.86& 2.04& 1.85&2.02\\ 
        95$\%$ HPD  & (1.53, 2.18) & (1.67, 2.40) & (1.41, 2.21) &(1.52, 2.52)\\
\cmidrule(l){2-5}
\textbf{} &\multicolumn{1}{M{24mm}}{${\theta}_6$} &\multicolumn{1}{M{24mm}}{${\theta}_7$} &\multicolumn{1}{M{24mm}}{${\theta}_8$} &\multicolumn{1}{M{24mm}}{${\theta}_9$} \\ \cmidrule(l){2-5}
        Posterior mean  & 2.35 &2.71 & 0.05&1.53 \\ 
        95$\%$ HPD  & (1.91, 2.73)&(2.05, 3.31) &(-0.30, 0.37) & (1.26, 1.76)  \\ \bottomrule    
    \end{tabular}
    \caption{Posterior inference results for the ERGM on the Faux Mesa high school network data. The computing times, number of auxiliary variable simulations, and efficiencies are identical to those reported in the main manuscript. 
    }
    \label{tab:ergm_all}
\end{table}
\clearpage

\subsection{Susceptible-Infected-Recovered Models}

\begin{table}[htbp]
  \centering
    \begin{tabular}{M{30mm}M{24mm}M{24mm}} \toprule
    \text{PMCMC} &\multicolumn{1}{M{24mm}}{$\gamma$} &\multicolumn{1}{M{24mm}}{$\rho$}\\ \midrule
        Posterior mean  & 0.50 & 0.10 \\ 
        95$\%$ HPD  &(0.50, 0.50) & (0.10, 0.10) \\ \toprule
    \text{$\text{DA-MCMC}_{\mathrm{F}}$}&\multicolumn{1}{M{24mm}}{$\gamma$} &\multicolumn{1}{M{24mm}}{$\rho$}\\ \midrule
        Posterior mean  &0.50 &0.10 \\ 
        95$\%$ HPD  &(0.50, 0.51) &(0.10, 0.10)\\
\bottomrule
    \end{tabular}
    \caption{Posterior inference results for the susceptible-infected-recovered models. The computing times, number of auxiliary variable simulations, and efficiencies are identical to those reported in the main manuscript.
    }
    \label{tab:sir_all}
\end{table}

\clearpage


\bibliographystyle{apalike} 
\bibliography{ref}

\end{document}